\documentclass[12pt,a4paper]{article}  

 \usepackage[skins,theorems]{tcolorbox}

\newtcolorbox{whitebox}{colback=white,colframe=black,boxrule=0.5mm,arc=4mm,auto outer arc}

\tcbset{highlight math style={enhanced,
  colframe=red,colback=white,arc=0pt,boxrule=1pt}}
  \usepackage[bookmarksopen, bookmarksnumbered, bookmarksopenlevel=2]{hyperref}
  \usepackage{tikz}
  \usepackage{tikz-3dplot}
  \usepackage{multirow}
 \usetikzlibrary{calc}
 \usetikzlibrary{decorations} %
 \usepackage[UKenglish]{babel}
 \usepackage[toc,page]{appendix}
 \usepackage{amsmath}
 \usepackage{amssymb}
 \usepackage{amsthm}
 \usepackage{graphicx}
 \usepackage{hhline}
 \usepackage[bf]{caption}
\usepackage{cite}
\usepackage[vcentermath]{youngtab}
\usepackage{geometry}
\usepackage{slashed}
\usepackage{color}
\usepackage{stackrel}
\usepackage{tikz-cd} 
\usepackage{tkz-euclide}
\usepackage{cancel} 
\usepackage{subcaption}
\usepackage[normalem]{ulem}
\usepackage{mdframed}
\usepackage{adjustbox}
\usepackage{tcolorbox}
\usepackage{enumitem}

\usepackage{empheq}
\usepackage{arydshln}
\usepackage[dvipsnames]{xcolor}
\newenvironment{eqn*}{\begin{equation*}\begin{aligned}}{\end{aligned}\end{equation*}\noindent}

\newtheorem{claim}{Claim}

\newcommand{\bqa}{\begin{eqnarray}}
\newcommand{\eqa}{\end{eqnarray}}

\hypersetup{
    pdftitle={},
    pdfauthor={},
    pdfsubject={}
}
\numberwithin{equation}{section}
\numberwithin{table}{section}

\makeatletter

\definecolor{BF}{HTML}{f903d7}

\newtheorem{definition}{Definition}

\newtheorem{example}{Example}

\newtheorem{swamp}{Quantum Gravity Principle}

\newcommand{\be}{\begin{equation}}
\newcommand{\ee}{\end{equation}}
\newcommand{\beq}{\begin{equation}}
\newcommand{\eeq}{\end{equation}}
\newcommand{\ba}{\begin{aligned}}
\newcommand{\ea}{\end{aligned}}

\newcommand{\bea}{\begin{eqnarray}}
\newcommand{\eea}{\end{eqnarray}}

\newcommand{\cC}{\mathcal{C}}

\newcommand{\cN}{\mathcal{N}}

\newcommand{\cA}{\mathcal{A}}

\newcommand{\cF}{\mathcal{F}}

\newcommand{\cR}{\mathcal{R}}
\newcommand{\cS}{\mathcal{S}}

\newcommand{\cM}{\mathcal M}
\newcommand{\cQ}{\mathcal Q}

\newcommand\bi{\begin{itemize}}
\newcommand\ei{\end{itemize}}

\def\Im{\mathop{\mathrm{Im}}\nolimits}

\def\unit{{1\kern-.65ex {\rm l}}}
\def\1{{1\kern-.65ex {\rm l}}}

\def\ii{{\rm i}}

\tcbset {
	base/.style={
		arc=0mm, 
		bottomtitle=0.5mm,
		boxrule=0mm,
		colbacktitle=black!10!white, 
		coltitle=black, 
		fonttitle=\bfseries, 
		left=2.5mm,
		leftrule=1mm,
		right=3.5mm,
		title={#1},
		toptitle=0.75mm,
		breakable
	}
}

\newtcolorbox{subbox}[1]{
	colframe=black!30!white,
	base={#1}
}

\newcount\hour \newcount\minute
\hour=\time \divide \hour by 60
\minute=\time
\def\now{%
\ifnum \hour<13
  \ifnum \hour=0 \advance \hour by 12 \number\hour:\else \number\hour:\fi%
     \ifnum \minute<10 0\fi%
     \number\minute%
\ A.M.%
\else \advance \hour by -12 \number\hour:%
  \ifnum \minute<10 0\fi%
  \number\minute%
  \ P.M.%
\fi%
}

\makeatother

\begin{document}
\begin{flushright}
{\tt\normalsize ZMP-HH/26-32}\\
\end{flushright}

\vskip 40 pt
\begin{center}
{\large \bf
Quantum Gravity Principles for 4d $\cN=2$ Supergravity Theories\\ \vspace{0.35cm} and Asymptotic Hodge Theory
} 

\vskip 11 mm

Lukas Kaufmann 
and Timo Weigand 

\vskip 11 mm
\small \textit{II. Institut f\"ur Theoretische Physik, Universit\"at Hamburg, Notkestrasse 9,\\ 22607 Hamburg, Germany} \\[3 mm]
\small \textit{Zentrum f\"ur Mathematische Physik, Universit\"at Hamburg, Bundesstrasse 55, \\ 20146 Hamburg, Germany  }   \\[3 mm]

\end{center}

\vskip 7mm

\begin{abstract}

We study the behaviour of general four-dimensional $\cN=2$ supergravity theories near boundaries at infinity of the vector multiplet moduli space, with a focus on a clear distinction between properties that follow from supergravity alone rather than from a consistent UV completion in quantum gravity.
The basis for our analysis is the appearance of a real variation of polarised Hodge structure governing the vector multiplet moduli space of $\cN=2$ supergravity,  combined with very recent advances in the mathematics literature.
The resulting classification of infinite distance degenerations exactly agrees, at the level of gauge couplings, with a possible interpretation of the limits as decompactification limits to five or six dimensions or as weakly coupled string limits, as expected from a quantum gravity perspective.
This observation extends previous results found in the context of string compactifications on Calabi--Yau varieties and, remarkably, holds for general four-dimensional $\cN=2$ supergravity theories regardless of a UV completion. We furthermore show the existence of a K\"ahler basis in which the triple couplings of the prepotential are non-negative, irrespective of a geometric origin of the  supergravity theory.
Finally we propose two non-trivial constraints for a four-dimensional $\cN=2$  supergravity to be compatible with quantum gravity: Integrality of the Hodge structure as a necessary condition for the completeness hypothesis and the existence of a simple normal crossing compactification of the moduli space to ensure compatibility of the  weakly coupled asymptotic gravitational duality frames. 

\end{abstract}

\vfill

\thispagestyle{empty}
\setcounter{page}{0}
 \newpage

\tableofcontents
\vspace{25pt} 

\setcounter{page}{1}

\section{Introduction}
Candidate universal principles of quantum gravity \cite{Vafa:2005ui} can be investigated from two complementary directions. Bottom-up approaches take a low-energy perspective
and infer consistency constraints from, for example, semi-classical black holes and general properties of gravitational effective field theories. Top-down approaches test these proposals in explicit, UV-complete theories. In string theory, many quantum gravity conjectures have been verified quantitatively across broad classes of compactifications. For reviews and an overview of the literature, see for example \cite{Palti:2019pca,Grana:2021zvf,vanBeest:2021lhn,Agmon:2022thq}.

In this context, the systematic study of moduli spaces of supergravity theories from a quantum gravity perspective forms an interesting middle ground: While not necessarily tied to concrete UV completions, it is sufficiently restrictive to allow for a quantitative and meaningful analysis.
A central question is which properties, in addition to the constraints from supersymmetry alone, a supergravity moduli space  must have in order to originate from a consistent quantum gravity theory, or, more specifically, from a compactification of string theory.
For example, vector multiplet moduli spaces in 4d ${\cal N}=2$ supergravity are projective special K\"ahler manifolds~\cite{deWit:1984wbb,Castellani:1990tp,Castellani:1990zd,Strominger:1990pd,Craps:1997gp}, and a long-standing challenge is to single out the precise subset of these which form (part of) a Calabi--Yau moduli space.

Identifying such additional constraints on supergravity moduli spaces is also important in view of  a potential drawback of the top-down justification of quantum gravity principles as currently performed in the literature:
 With few exceptions, e.g. \cite{Basile:2022zee,Baines:2025upi,Baines:2026aug}, the focus has been on situations with (extended) supersymmetry at low energies. This raises the important question to what extent the proposed  principles identify genuine features of quantum gravity, rather than reflecting our focus on the supersymmetric lamppost.

In this work, we address these questions in the context of the Swampland Distance \cite{Ooguri:2006in} and the Emergent String Conjecture~\cite{Lee:2019wij}. We will see that within the specific setup of vector multiplet moduli spaces of four-dimensional $\cN=2$ supergravities, the extended supersymmetry allows for a precise identification of those pieces of the conjectures which are a result of supersymmetry alone and those which contain genuine information on the UV completion of the theory. Recall that the Distance Conjecture proposes that at infinite distance in the moduli space of any effective description of quantum gravity an infinite tower of states becomes light exponentially fast in the geodesic distance,
\begin{equation*}
    \frac{m_{\rm tower}}{M_{\rm Pl}}\sim\exp(-\alpha\triangle)\,,
\end{equation*}
thereby triggering the breakdown of this effective description. This tower can always be interpreted (possibly in some dual description) as consisting of Kaluza--Klein modes or the excitations of a fundamental critical string, at least according to the Emergent String Conjecture. In other words, every infinite distance limit in a quantum gravity moduli space should either be a decompactification or an emergent string limit. Evidence for these two conjectures has been found both in top-down approaches (including \cite{Grimm:2018ohb,Blumenhagen:2018nts,Lee:2018urn,Grimm:2018cpv,Corvilain:2018lgw,Lee:2019xtm,Lee:2019wij,Grimm:2019bey,Baume:2019sry,Xu:2020nlh,Heidenreich:2020ptx,Lanza:2021udy,Klaewer:2020lfg,Bastian:2020egp,Grimm:2021ikg,Bastian:2021eom,Alvarez-Garcia:2021pxo,Etheredge:2022opl,Rudelius:2023odg,Etheredge:2023odp,Alvarez-Garcia:2023gdd,Alvarez-Garcia:2023qqj,Aoufia:2024awo,Hassfeld:2025uoy,Monnee:2025ynn,Monnee:2025msf,Aoufia:2026bau}) and from the bottom-up perspective (such as \cite{Hamada:2021yxy,Calderon-Infante:2023ler,Cribiori:2023ffn,Basile:2023blg,Basile:2024dqq,Herraez:2024kux,Bedroya:2024ubj,Kaufmann:2024gqo,david-vicente,Baines:2025upi,Baines:2026aug}). To analyse the role played by supersymmetry, it is important to carefully distinguish \cite{Kaufmann:2024gqo,Monnee:2025ynn} the following two aspects of the Emergent String Conjecture:
\begin{enumerate}
    \item The dichotomy of infinite distance limits as being either of decompactification or of emergent string type \cite{Lee:2019wij} poses  strong constraints on the exponential rates $\alpha$ appearing in the tower mass scale~\cite{Agmon:2022thq,Etheredge:2022opl,Castellano:2023stg,Castellano:2023jjt,Etheredge:2024tok,Etheredge:2025ahf}. In limits where the scaling parameters go to infinity  at the same rate, $\alpha$ is constrained as 
    \begin{equation} \label{scaling-gen}
        \alpha = \alpha_{\rm KK} = \sqrt{\frac{1}{d-2} + \frac{1}{n}} \,  \qquad {\rm or} \qquad \alpha =\alpha_{\rm string} = \sqrt{\frac{1}{d-2}} \,.
    \end{equation}
    Here the first possibility corresponds to a decompactification limit from $d$ to $d+n$ spacetime dimensions, while the second case refers to an asymptotic weak coupling limit of an emergent string. In more general multi-parameter limits 
    \begin{equation}  \label{gen-bound}
        \alpha \geq \sqrt{\frac{1}{d-2}}\,.
    \end{equation}
 
    \item Both the tower spacing and the degeneracy of the leading asymptotically light tower is constrained according to its interpretation as a KK/winding or a string excitation tower:
    \begin{eqnarray*}
        & {\rm KK:} \quad m_k = k \, m_0    \,, \qquad &d_k \sim k^{n-1}  \label{KK-tower}\,, \\
        & {\rm string:} \quad m_k = \sqrt{k} \, m_0    \,, \qquad &d_k \sim e^{c \sqrt{k}} \,.\label{string-tower}
    \end{eqnarray*}
\end{enumerate}
Note that in many contexts, the scalings \eqref{scaling-gen} can equivalently be interpreted as the vanishing rates of suitable gauge couplings. As demonstrated in \cite{Kaufmann:2024gqo} in a 5d context, this makes it possible to analyse the scalings \eqref{scaling-gen} already at the level of supergravity without having to refer to a particle tower, which is ultra-violet information beyond the original infrared theory.

The key observation of the present note is that for vector multiplet moduli spaces of 4d $\cN=2$ supergravities, the scaling behaviour \eqref{scaling-gen} follows as a direct consequence of the supersymmetry algebra. More precisely, in any single-parameter infinite distance limit in the vector multiplet moduli space of a 4d $\cN=2$ supergravity theory, the gauge couplings vanishing at the fastest rate exhibit the characteristic scaling \eqref{scaling-gen} with $n=1$, $n=2$ or $n=\infty$ (reproducing the scaling of $\alpha_{\rm string}$). Furthermore, in all multi-parameter limits of simple normal crossing type, the asymptotically vanishing gauge couplings can be shown to respect the scalings \eqref{scaling-gen} or \eqref{gen-bound}, as appropriate. Intriguingly, this behaviour follows without further assumptions on the UV completion of the low-energy supergravity theory. Indeed, revisiting early results on 4d $\cN=2$ supergravity~\cite{Ferrara:1991np,Ceresole:1992su} (see also \cite{Cecotti:1989kn,Cecotti:2020rjq}), we stress the existence of a {\it real} variation of (weight 3) polarised Hodge structure (VPHS) underlying all 4d $\cN=2$ supergravity vector multiplet moduli spaces. Thanks to very recent advances in the mathematical literature~\cite{Deng2022OnTN,complexVHS,complexVHS-multi}, this real VPHS guarantees the above scaling behaviour of gauge couplings for limits of simple normal crossing type.

Examples of real VPHS are those governing the vector multiplet moduli space of compactifications of Type II string theory on Calabi--Yau threefolds. All such VPHS have the important additional property that they descend from an {\it integral} VPHS, which is non-trivial extra structure not present, a priori, in an arbitrary 4d $\cN=2$ supergravity theory. This integrality of the subset of geometrically realised VPHS underlies the classic mathematical results of \cite{schmid, CKS}. In the quantum gravity literature, the technology of integral VPHS was introduced in the context of Type IIB Calabi--Yau compactifications in~\cite{Grimm:2018ohb} and has played an important role ever since~\cite{Grimm:2018cpv,Corvilain:2018lgw,Grimm:2019wtx,Grimm:2019bey,Grimm:2020cda,Bastian:2020egp,Grimm:2021ikg,Bastian:2021eom,Bastian:2021hpc,Grimm:2021ckh,Hassfeld:2025uoy,Monnee:2025ynn,Monnee:2025msf}. In particular, the scaling behaviour \eqref{scaling-gen} can be deduced from such integral VPHS in string compactifications, while the existence and properties of the relevant particle towers only follow from more detailed properties of the degenerating geometry \cite{Lee:2019wij,Hassfeld:2025uoy,Monnee:2025ynn,Monnee:2025msf}. However, the characteristic exponential scaling rates \eqref{scaling-gen} are not tied to a geometric realisation of the VPHS, and in fact not to any UV completion of the 4d $\cN=2$ supergravity. This perhaps surprising result is due to the fact that already for the real VPHS observed in \cite{Ferrara:1991np,Ceresole:1992su} in general 4d ${\cal N}=2$ supergravities, the characteristic asymptotic scaling of the Hodge norm  \cite{schmid,CKS} can be established \cite{Deng2022OnTN,complexVHS,complexVHS-multi}. This is true for single-parameter limits without any specification \cite{complexVHS} and for multi-parameter limits under the assumption of simple normal crossing \cite{complexVHS-multi} (see also \cite{Deng2022OnTN}). Combined with the previous works~\cite{Grimm:2018ohb,Grimm:2018cpv,Corvilain:2018lgw,Grimm:2019wtx,Grimm:2019bey,Grimm:2020cda,Bastian:2020egp,Grimm:2021ikg,Bastian:2021eom,Bastian:2021hpc,Grimm:2021ckh,Hassfeld:2025uoy,Monnee:2025ynn,Monnee:2025msf}, this leads to the scaling laws \eqref{scaling-gen} in the vector multiplet sector of general 4d $\cN=2$ supergravity theories.
 
This does by no means imply that every such supergravity theory can arise from a consistent quantum gravity in the UV. In fact, we argue that two of the important extra properties which are automatic for VPHS of Calabi--Yau (and more generally geometric) origin, integrality and simple normal crossing, can be related to universal principles of quantum gravity and are hence necessary for consistency of the theory: First, the existence of an integral structure underlying the real VPHS provided by the supergravity (Quantum Gravity Principle \ref{Swamp1}) is directly linked to the completeness hypothesis. Second, the need for well-defined duality frames at infinite distance necessitates the existence of a (simple) 
normal crossing compactification of moduli space (Quantum Gravity Principle \ref{Swamp2}). 

We stress that we are not the first to discuss the VPHS underlying 4d $\cN=2$ vector multiplet moduli spaces \cite{Cecotti:1989kn,Ferrara:1991np,Ceresole:1992su, Cecotti:2020rjq} in the context of quantum gravity. It was already anticipated in~\cite{Cecotti:1989kn} that the lack of an integral structure underlying this VPHS should be related to properties of the quantum theory. We identify this quantum datum as the completeness hypothesis put forward in~\cite{Polchinski:2003bq,Banks:2010zn} and emphasise that this additional property of the VPHS is also an obvious necessary condition for the existence of the Distance Conjecture tower of states.  More subtle is the importance of a simple normal crossing compactification of moduli space for compatibility with the Emergent String Conjecture. Note that the normal crossing condition arises also in the mathematical work \cite{complexVHS-multi} as a sufficient condition to establish the vanishing theorems of the Hodge norm underlying the scaling behaviour \eqref{scaling-gen}. Our main point is that this condition is also necessary for physical reasons: Specifically, it is key to guarantee uniqueness \cite{Lee:2019wij} of the fundamental string in emergent string limits (see \cite{Hassfeld:2025uoy} for a discussion in the complex structure moduli space of Type IIB compactifications) and more generally to allow for peaceful coexistence of the asymptotic gravitational duality frames associated with the infinite distance strata coming together at higher codimension in moduli space. In this sense, integrality and (simple) normal crossing are necessary conditions, at the supergravity level, for the appearance of light gravitational towers as predicted by the Distance and Emergent String Conjecture, which represents the genuine quantum gravity constraint on the 4d ${\cal N}=2$ supergravity.
 
This article is structured as follows. In Section~\ref{sec:special-geometry} we review the projective special K\"ahler geometry of 4d $\cN=2$ vector multiplet moduli spaces and give an explicit construction of the real variation of polarised Hodge structure which is intrinsic to this supergravity data. In Section~\ref{sec:Asymptotics} we redo (parts of) the analysis of~\cite{Grimm:2018cpv,Grimm:2018ohb,Hassfeld:2025uoy,Monnee:2025ynn} in the more general supergravity setting and interpret the resulting classification of infinite distance limits in the light of the Emergent String Conjecture. In Section~\ref{sec:swamp} we propose integrality of the VPHS and (simple) normal crossing boundaries of moduli space as necessary quantum gravity principles by linking them to the completeness hypothesis and the concept of dualities, respectively. We also compare our findings to the class of effective theories whose moduli spaces are symmetric spaces, as analysed in \cite{Baines:2025upi,Baines:2026aug}. In Section~\ref{sec:positive} we analyse in detail supergravities with a cubic prepotential which arises near points of maximal unipotent monodromy and establish a basis in which all cubic coefficients in the prepotential are non-negative. Finally, we conclude in Section~\ref{sec:discussion}.

\section{Projective special K\"ahler geometry}\label{sec:special-geometry}
A four-dimensional supergravity theory preserving eight real supercharges coupled to $n$ vector multiplets is described by the action~\cite{Andrianopoli_1997}
\begin{equation}\label{eq:4d-action}
    S_{\rm 4d}=\int_{\mathbb{R}^{1,3}}\left(\frac{M_\mathrm{Pl}^2}{2}\star R
    -2g_{i\bar\jmath}\mathrm{d}t^i\wedge\star\mathrm{d}\overline{t}^{\bar\jmath}
    +\frac{1}{2}\mathrm{Im}\mathcal{N}_{IJ}(t,\bar{t})F^I\wedge\star F^J 
    +\frac{1}{2}\mathrm{Re}\mathcal{N}_{IJ}(t,\bar{t})F^I\wedge F^J
     \right)\,,
\end{equation}
where the $t^i$, $i=1,\dots,n$, denote the complex scalars in the vector multiplets and $F^I={\rm d}A^I$, $I=0,\dots,n$, label the field strengths of the ${\rm U}(1)$ gauge fields in the vector multiplets as well as the graviphoton. In the above we ignore the hypermultiplet sector and restrict ourselves to the two-derivative level of the bosonic part of the action. 
 
The vector multiplet moduli space $M_n$ associated to the action~\eqref{eq:4d-action} is a projective special K\"ahler manifold of complex dimension $n$~\cite{deWit:1984wbb,Castellani:1990tp,Castellani:1990zd,Strominger:1990pd,Craps:1997gp} with local coordinates $t^i$. In the following, we will review two standard formulations of this geometry and refer to~\cite{Craps:1997gp} for a proof of their equivalence. While the local definition we state first is more commonly used in practice, the second more global definition turns out to be very useful to describe the structure underlying the asymptotic regimes of $M_n$.

\subsection{Local formulation}\label{ssec:local}
A projective special K\"ahler manifold $M_n$ of dimension $n$ is a K\"ahler manifold (with K\"ahler metric $g_{i\bar \jmath}$ as appearing in~\eqref{eq:4d-action}) such that around each point of $M_n$ there is a set of local homogeneous coordinates $Z^I$, $I=0,\dots,n$ with the following properties: There exists a holomorphic prepotential function $\mathbb{z}(Z^I)$, homogeneous of degree two in the $Z^I$, such that the K\"ahler potential $K$ on $M_n$ can be written as
\begin{equation}\label{eq:sg-Kpot}
    e^{-K}=2{\rm Im}\left(Z^I\overline{\mathbb{z}}_I\right)\,.
\end{equation}
Here $\mathbb{z}_I=\partial_I\mathbb{z}$ are the derivatives of $\mathbb{z}$ with respect to the homogeneous coordinates $Z^I$.\footnote{\label{fnote-homo}For future purposes we recall that homogeneity of $\mathbb{z}$, i.e. $Z^J \mathbb{z}_J = 2 \mathbb{z}$, implies the useful identities $\mathbb{z}_I = \mathbb{z}_{IJ} Z^J$ and $\mathbb{z}_{IJK}Z^K =0$.} The latter are related to the (inhomogeneous) local coordinates $t^i$ of $M_n$ as  $t^i = {Z^i/Z^0}$. In a physics context, the prepotential also determines the couplings appearing in the action~\eqref{eq:4d-action}  via 
\begin{equation}\label{eq:4d-gaugekin}
    \cN_{IJ}=\overline{\mathbb{z}}_{IJ}+2\ii\frac{\Im\mathbb{z}_{IK}Z^L\,\Im\mathbb{z}_{JL}Z^L}{Z^M\Im\mathbb{z}_{MN}Z^N}\,.
\end{equation}
The quantities $Z^I$ and $\mathbb{z}_I$ are combined in the so-called period vector
\begin{equation}\label{eq:period-vector}
    \Pi=\left(\begin{array}{c}
        Z^I\\ \mathbb{z}_I
    \end{array}\right)\,.
\end{equation}
Moreover, between each two overlapping charts $U_\alpha, U_\beta\subset M_n$, the period vector transforms as
\begin{equation}\label{eq:sg-local-trafo}
    \Pi_{(\alpha)}=e^{-f_{\alpha\beta}(Z)}M_{\alpha\beta}\Pi_{(\beta)}\,,
\end{equation}
where $f_{\alpha\beta}:U_\alpha\cap U_\beta\to\mathbb{C}$ is a holomorphic function and $M_{\alpha\beta}\in{\rm Sp}(2n+2,\mathbb{R})$. The transition functions $e^{-f_{\alpha\beta}}M_{\alpha\beta}$ have to satisfy the usual cocycle conditions. We refer to the review~\cite{Craps:1997gp} for additional details on the derivation of this structure from the closure of the 4d $\cN=2$  supersymmetry algebra. The latter may be expressed as a set of four coupled non-holomorphic first order differential equations for the period vector $\Pi$~\cite{Ferrara:1991np,Ceresole:1992su},
\begin{align}\label{eq:non-holo}
    \begin{split}
        D_i \Pi&=U_i\,,\\
        D_i U_j&=-\ii W_{ijk}g^{k\bar k}\bar{U}_{\bar{k}}\,,\\
        D_i\bar{U}_{\bar\jmath}&=g_{i\bar\jmath}\bar{\Pi}\,,\\
        D_i\bar\Pi&=0\,.
    \end{split}
\end{align}
Here, as before $i=1,\dots,n$, 
\begin{equation}\label{eq:K-deriv}
    D_i=\partial_i+p\partial_iK
\end{equation}
is the K\"ahler covariant derivative acting locally on a function of K\"ahler weight $p$\footnote{More specifically, a function $g:M_n\to\mathbb{C}$ has (chiral) K\"ahler weights $(p,\bar{p})$ if under a K\"ahler transformation $K(z,\bar{z})\to K(z,\bar{z})+f(z)+\bar{f}(\bar{z})$ with $f$ holomorphic, $g$ transforms as $g\to e^{-pf-\bar{p}\bar{f}}g$. For example, the holomorphic period vector $\Pi$ has K\"ahler weights $(1,0)$, see~\eqref{eq:sg-local-trafo}, while $e^{K/2}\Pi$ has weights $(\frac{1}{2},-\frac{1}{2})$.\label{fn:Kahler-weight}} and the fully symmetric tensor $W_{ijk}$ is related to the third derivatives of the prepotential via
\begin{equation}\label{eq:sg-yukawa}
    W_{ijk}=e^K\partial_iZ^I\partial_jZ^J\partial_kZ^K\mathbb{z}_{IJK}\,.
\end{equation}
In passing, we note that it was shown in~\cite{Ferrara:1991np,Ceresole:1992su} that the system~\eqref{eq:non-holo} may be rewritten as a single fourth order holomorphic differential equation of the schematic form
\begin{equation}\label{eq:holo}
    \left(\partial_i-\mathbf{A}_i\right)\mathbf{V}=0\,,
\end{equation}
where we refer to the cited references for details on the notation and the necessary algebra.

Notice that the transition functions of the period vector as written in~\eqref{eq:sg-local-trafo} contain an element of the group ${\rm Sp}(2n+2,\mathbb{R})$. This group is also referred to as the electric-magnetic duality group of the supergravity. Indeed, recall that the magnetic dual ${\rm U}(1)$ field strengths $G_I$ are given by
\begin{equation}\label{eq:magnetic-G}
    G_I=\frac{\delta S_{\rm 4d}}{\delta F^I}={\rm Im}(\cN_{IJ})\star F^J+{\rm Re}(\cN_{IJ})F^J\,.
\end{equation}
Assuming the existence of a (BPS) particle, its charge vector takes the form $q=(p^I,q_I)\in\cC\subseteq\mathbb{Z}^{2n+2}$, with the $p^I$ referring to its magnetic charges under $G_I$ and the $q_I$ are its electric charges with respect to the $F^I$. Such charge vectors are acted upon by an ${\rm Sp}(2n+2,\mathbb{Z})$ subgroup of the full electric-magnetic duality group ${\rm Sp}(2n+2,\mathbb{R})$, as follows from Dirac quantisation of charges. We will come back to this important point in Section~\ref{ssec:completeness}. For now, we work in the purely electric frame set by the action~\eqref{eq:4d-action} and do \emph{not} assume the existence of any kind of particle (or other object) coupled to the supergravity.

\subsection{Global formulation}\label{ssec:global}
Phrased more globally, the transformation properties~\eqref{eq:sg-local-trafo} show that the period vector $\Pi$ of the vector multiplet moduli space $M_n$ is a section of a holomorphic vector bundle of the form
\begin{equation}
    L\otimes H\to M_n\,,\quad \Pi\in H^0(M_n,L\otimes H)\,,
\end{equation}
over moduli space. The line bundle $L$ is a ${\rm U}(1)$-principal bundle and the vector bundle $H$ of rank ${\rm rk}_\mathbb{R}H=2(n+1)$ has structure group ${\rm Sp}(2n+2,\mathbb{R})$. The local relation~\eqref{eq:sg-Kpot} between the K\"ahler potential of $M_n$ and its prepotential can be stated in terms of the section $\Pi$ as
\begin{equation}\label{eq:Hodge-Kpot}
    e^{-K}=\ii\langle\bar\Pi,\Pi\rangle\,.
\end{equation}
Here the sesqui-linear form $\langle.,.\rangle$ on $L\otimes H$ is given in terms of the symplectic pairing $(\cdot,\cdot)$ on the fibers of $H$ and the Hermitian metric $h$ on $L$ as
\begin{equation}
    \langle\Pi,\Pi'\rangle_t=h(s,s')_t(v,v')_t\,,
\end{equation}
for $\Pi(t)=s\otimes v\in (L\otimes H)_t$ and likewise for $\Pi'$. Furthermore, from the form of the K\"ahler covariant derivative~\eqref{eq:K-deriv} entering the local defining equations~\eqref{eq:non-holo} it follows that the connection 1-form on $L$ is the derivative of the K\"ahler potential $K$ so that the first Chern class of $L$ coincides with the class of the K\"ahler form $J=\frac{\ii}{2\pi}\partial\bar\partial K$ on $M_n$. In other words, $[J]\in H^2(M_n,\mathbb{Z})$.\footnote{A K\"ahler manifold with integral K\"ahler class is called Hodge--K\"ahler manifold.} The line bundle $L$ reflects the projectivity of the section $\Pi$ and is related to K\"ahler transformations in the sense that a rescaling $\Pi\to e^{-f(z)}\Pi$ induces the transformation
\begin{equation}\label{eq:K-trafo}
    K\to K+f+\bar f
\end{equation}
of the K\"ahler potential, see also Footnote~\ref{fn:Kahler-weight}. For details on the equivalence of the local and global formulations of projective special K\"ahler geometry we refer to~\cite{Craps:1997gp}.

We close this section by noting that there is yet another definition of local/ projective special K\"ahler manifolds, namely as quotients of so-called conic special K\"ahler manifolds. This notion is more commonly used in the math literature, see e.g.~\cite{Mantegazza:2021vyx,Freed:1997dp}. The construction presented in the next section of course works also from this perspective.

\subsection{An underlying variation of polarised Hodge structure}\label{ssec:VPHS}
This global viewpoint on special geometry makes the existence of an underlying Hodge structure apparent \cite{Ferrara:1991np,Ceresole:1992su} (see also \cite{Cecotti:1989kn,Cecotti:2020rjq}). To see this, we fix some point $t\in M_n$ and consider the fiber $H_{\mathbb{C},t}$ of the complexified bundle $H_\mathbb{C}=H\otimes\mathbb{C}$ over this point. Fixing a K\"ahler frame as in~\eqref{eq:K-trafo} (which in the local language of Section~\ref{ssec:local} means fixing $Z^0=1$), the section $\Pi\in H^0(M_n,L\otimes H)$ can be viewed as a section of $H$. By extending linearly to $\mathbb{C}$, we hence have a non-zero vector 
\begin{equation}
     V = e^{K(t,\bar{t})/2}\Pi(t)\in H_{\mathbb{C},t}
\end{equation}
in the complex fiber $H_{\mathbb{C},t}$ over $t\in M_n$.\footnote{To follow the conventions of~\cite{Craps:1997gp}, in the following we use the covariantly holomorphic section $e^{K/2}\Pi$ instead of the holomorphic $\Pi$.} In a local trivialisation of $H$ near $t\in M_n$, the K\"ahler covariant derivative of $V$ then takes the form $D_iV=\partial_iV+\frac{1}{2}(\partial_iK)V$, where $\partial_i=\frac{\partial}{\partial t^i}$ for local coordinates $t^i$ near $t$ (see also Footnote~\ref{fn:Kahler-weight}). Using the standard form of the complexified symplectic pairing $(\cdot,\cdot)$ on the fiber $H_{\mathbb{C},t}$ as well as the homogeneity of the prepotential (see in particular Footnote \ref{fnote-homo}) it follows
 that
\begin{equation}\label{eq:SG-rels}
    (V,D_iV)= (V,\partial_iV)+\frac{1}{2}\partial_iK( V,V)=0\,,\,\,( D_iV,D_jV)=0\,.
\end{equation}

\paragraph{Hodge structure.} With these identities, we can now prove that there is a Hodge decomposition $H^{p,q}_t$ of the complexified fiber $H_{\mathbb{C},t}$. Recall that a Hodge structure of weight $w$ on a real vector space $H_t$ is a decomposition of its complexification $H_{\mathbb{C},t}$ into $w+1$ subspaces $H^{p,q}_t$ satisfying
\begin{equation}\label{eq:Hodge-decomp-def}
    H_{\mathbb{C},t}=\bigoplus_{p+q=w}H^{p,q}_t\,,\quad \overline{H^{p,q}_t}=H^{q,p}_t\,.
\end{equation}
As a candidate weight $w=3$ decomposition, we define
\begin{align}
    \begin{split}\label{eq:Hodge-decomp}
        H^{3,0}_t&={\rm span}_\mathbb{C}\{V\}\,,\\
        H^{2,1}_t&={\rm span}_\mathbb{C}\{D_iV\}\,,\\
        H^{1,2}_t&=\overline{H^{2,1}_t}={\rm span}_\mathbb{C}\{\overline{D_iV}\}\,,\\
        H^{0,3}_t&=\overline{H^{3,0}_t}\,.
    \end{split}
\end{align}
The only non-trivial statement to prove is that $H^{3,0}_t\oplus H^{2,1}_t$ has complex dimension $n+1$. As we will now show, this is a simple consequence of the identities~\eqref{eq:SG-rels}. Choose $\alpha,\beta^i\in\mathbb{C}$ such that $\alpha V+\beta^i D_iV=0$. Then, using~\eqref{eq:Hodge-Kpot},
\begin{equation}
    0=(\bar{V},\alpha V+\beta^i D_iV)=-\ii \alpha +(\bar{V},\beta^i D_iV)\,,
\end{equation}
where the second term vanishes due to $D_i(\bar{V},V)=0$.\footnote{Indeed, since $(\bar V, V) = i$ because of \eqref{eq:Hodge-Kpot}, we have  $0=D_i(\bar{V},V)=(D_i\bar{V},V)+(\bar{V},D_iV)$ and the first term vanishes as $\bar{V}$ is covariantly anti-holomorphic.} Therefore, $\alpha=0$ and $\beta^iD_iV=0$. Furthermore,
\begin{equation}
    0=(\beta^iD_iV,\overline{D_jV})=-\ii\beta^i g_{i\bar{\jmath}}\,,
\end{equation}
from which we conclude that also $\beta^i=0$.\footnote{A direct computation gives $[D_i,D_{\bar{\jmath}}]V=-g_{i\bar{\jmath}}V$. The desired equality then follows by taking a covariant derivative of $(\bar{V},D_iV)=0$.} It follows that~\eqref{eq:Hodge-decomp} is indeed a Hodge decomposition of the complexified fiber $H_{\mathbb{C},t}$. From this decomposition we may define the associated Hodge filtration $F^\bullet_t$ via
\begin{equation}
    F^p_t=\bigoplus_{q\geq p}H^{q,3-q}_t\,.
\end{equation}

\paragraph{Weil operator.} Associated to the Hodge structure on $H_t$ is a linear operator $C:H^{p,q}_t\to H^{p,q}_t$, called Weil operator, with the property \begin{equation}
Cv=\ii^{p-q}v
\end{equation}
for $v\in H^{p,q}_t$. We refer to the reviews~\cite{vandeHeisteeg:2022gsp,Monnee:2024gsq} for additional information. The important point for us is that $C$ can be written in terms of the gauge kinetic matrix $\cN$ given in~\eqref{eq:4d-gaugekin}. Indeed, defining~\cite{Ceresole:1995ca}
\begin{equation}
    \cM=\left(\begin{array}{cc}
        {\rm Im}(\cN)+{\rm Re}(\cN)({\rm Im}(\cN))^{-1}{\rm Re}(\cN) & -{\rm Re}(\cN)({\rm Im}(\cN))^{-1}  \\
        -({\rm Im}(\cN))^{-1}{\rm Re}(\cN) & ({\rm Im}(\cN))^{-1}
    \end{array}\right)\,,
\end{equation}
one checks that\footnote{To do so, one needs to use $\mathbb{z}_I=\cN_{IJ}Z^J$ as well as $D_i\mathbb{z}_I=\overline{\cN}_{IJ}D_iZ^J$. Both relations are proven in Section~2 of~\cite{Ceresole:1995ca} and we refer to this reference for more details.}
\begin{equation}\label{eq:Weil}
    C=\left(\begin{array}{cc}
         & -\mathbb{I}_{n+1} \\
        \mathbb{I}_{n+1} & 
    \end{array}\right)\cM
\end{equation}
satisfies the properties of the Weil operator associated with the given Hodge structure on $H_t$. With the help of $C$, one defines the Hodge norm
 \begin{equation}\label{eq:Hodge-norm}
    \|v\|^2=(v,C\bar{v})\,,\quad v\in H_{\mathbb{C},t}\,,
\end{equation}
which will play an important role in the sequel. For example, the relation \eqref{eq:Hodge-Kpot} can be expressed in terms of the Hodge norm as 
\begin{equation}
    e^{-K}=\|\Pi\|^2 \,.
\end{equation}

\paragraph{Variation of polarised Hodge structure.} If we drop the basepoint $t\in M_n$, we may consider how the Hodge decomposition of the fibers of $H_\mathbb{C}$ varies as we move in moduli space. The system~\eqref{eq:non-holo} of coupled differential equations can be rewritten in terms of the filtration $F^\bullet_t$ as
\begin{equation}
    \frac{\partial F^p}{\partial t^i}\subseteq F^{p-1}\,,\quad\frac{\partial F^p}{\partial\bar t^i}\subseteq F^p\,,
\end{equation}
which are exactly the Griffith's transversality and holomorphicity conditions underlying a variation of Hodge structure (VHS)~\cite{complexVHS,schmid}.\footnote{The original work~\cite{schmid} assumes the existence of an underlying integral structure, i.e. $H_t=H_{\mathbb{Z},t}\otimes_\mathbb{Z}\mathbb{R}$ for some $\mathbb{Z}$-module $H_{\mathbb{Z},t}$, whereas~\cite{CKS} generalises to real vector spaces but keeps an additional assumption on the monodromy. In the general supergravity setting this monodromy assumption does not hold which is why we rely on the more recent results of~\cite{complexVHS,complexVHS-multi}. In Section~\ref{ssec:completeness} we link the existence of an integral structure to the completeness hypothesis in quantum gravity. Such a connection was anticipated already in~\cite{Cecotti:1989kn}.} In this language, \eqref{eq:holo} is the associated Picard--Fuchs equation. In fact, the (anti-symmetric) symplectic pairing $(\cdot,\cdot)$ on the fibers of $H$ (and its complexification) is a polarisation for this VHS, meaning that
\begin{align}
    \begin{split}
        (H^{p,q}_t,H^{r,s}_t)=0\qquad & {\rm unless\,\,}(p,q)=(s,r)\,,\\
        \ii^{p-q}(v,\bar{v})>0\qquad & {\rm for\,\,}v\in H^{p,q}_t\setminus\{0\}\,.
    \end{split}
\end{align}
For example, for $(p,q)=(3,0)$ we have seen in previous paragraphs that $(V,D_iV)=0$, $(V,\overline{D_iV})=0$ and $\ii^3(V, \bar{V})=\ii^{-3}(\bar{V},V)=1$. The other cases are checked similarly. Thus, the symplectic pairing on $H$ turns the previously found VHS into a variation of polarised Hodge structure (VPHS).

To conclude, we have reviewed in this section the construction of a real weight 3 variation of polarised Hodge structure purely from the data given by a 4d $\cN=2$ supergravity theory coupled to a number of vector multiplets. We stress that this is a pure bottom-up argument and no further UV input---in particular no underlying geometric origin---has been assumed as everything is based solely on the closure of the supersymmetry algebra.

\section{Asymptotic couplings of 4d \texorpdfstring{$\cN=2$}{N=2} supergravities} \label{sec:Asymptotics}
We can now apply the machinery of asymptotic Hodge theory to study the behaviour of physical couplings at the asymptotic boundaries of the vector multiplet moduli space $M_n$ of a general 4d ${\cal N}=2$ supergravity theory. This analysis was carried out in great detail for moduli spaces arising from Calabi--Yau threefold compactifications of Type II string theory in~\cite{Grimm:2018ohb,Grimm:2018cpv,Corvilain:2018lgw,Grimm:2019wtx,Grimm:2019bey,Grimm:2020cda,Bastian:2020egp,Grimm:2021ikg,Bastian:2021eom,Bastian:2021hpc,Grimm:2021ckh,Hassfeld:2025uoy,Monnee:2025ynn,Monnee:2025msf}. The purpose of this section is to point out that at the level of couplings, this analysis carries over to the much more general 4d $\cN=2$ supergravity setting --- up to the important extra restriction to simple normal crossing in multi-parameter limits, which is automatically satisfied in limits arising from a Calabi-Yau compactification~\cite{viehweg}. 

The result of this discussion is
\begin{claim} \label{Claim1}
Consider a 4d ${\cal N}=2$ supergravity theory with vector multiplet moduli space $M_n$ and take an infinite distance limit on $M_n$ of simple normal crossing type. Then in any such limit, the gauge couplings behave like in a decompactification limit to 5d (type IV), to 6d (type III) or as in a weak coupling limit of fundamental string theory (type II). Specifically, the following properties hold:
\begin{enumerate}
    \item In every single-parameter limit or simple normal crossing multi-parameter limit in which all parameters scale at the same parametric rate (see \eqref{eq:homo-scaling}), there is a minimal physical 1-form coupling $\mathfrak{q}_{\rm min}$ (defined in \eqref{eq:1-form}) with the property that 
    \begin{equation}\label{eq:distance-scaling}
        \mathfrak{q}_{\rm min}^2\sim \exp\left(-\alpha_\mathtt{A}\triangle\right)\,
    \end{equation}
    in terms of the geodesic distance $\triangle$ on moduli space and with $\alpha_{\rm II}=1/\sqrt{2}$, $\alpha_{\rm III}=1$ and $\alpha_{\rm IV}=\sqrt{3/2}$.  In type III limits, there are two independent 1-form couplings leading to the scaling~\eqref{eq:distance-scaling} while in type IV limits this scaling is uniquely realised.
    \item In simple normal crossing multi-parameter limits in which some or all parameters scale at different rates, the exponential rate $\alpha_\mathtt{A}$ in \eqref{eq:distance-scaling} satisfies the bound $\alpha \geq \sqrt{\frac{1}{d-2}}$ for $d=4$ of \cite{Agmon:2022thq,Etheredge:2022opl}.
    \item In type III and IV limits, all physical 2-form couplings $\mathcal{Q}$ (defined in \eqref{eq:2-form}) satisfy the hierarchy $\mathcal{Q}\succ\mathfrak{q}^2_{\rm min}$, while in type II limits, there exists a minimal physical 2-form coupling $\mathcal{Q}_{\rm min}$ with the property $\mathfrak{q}^2_{\rm min}\sim\mathcal{Q}_{\rm min}$. Here $\mathcal{Q}_{\rm min}$ is uniquely determined in all 1-parameter limits, while in simple normal crossing multi-parameter limits several such $\mathcal{Q}_{\rm min}$ exist, which are mutually local with respect to each other by duality. 
 \end{enumerate}
\end{claim}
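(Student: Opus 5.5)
The plan is to transfer the asymptotic Hodge theory analysis of \cite{Grimm:2018ohb,Grimm:2018cpv,Hassfeld:2025uoy,Monnee:2025ynn} to the real VPHS constructed in Section~\ref{ssec:VPHS}. The key observation is that this analysis only uses three asymptotic Hodge-theoretic inputs:
\begin{enumerate}
    \item the nilpotent orbit approximation,
    \item the limiting mixed Hodge structure with its $\mathfrak{sl}_2$-splitting,
    \item the growth theorem for the Hodge norm.
\end{enumerate}
For a single-parameter limit, I would invoke \cite{complexVHS}, which establishes these for real (indeed complex) VPHS without any monodromy assumption. The Hodge norm then behaves as $\|v\|^2\sim (\mathrm{Im}\,t)^{\ell-3}$ for $v$ in the weight-$\ell$ piece of the monodromy weight filtration $W(N)$. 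For simple normal crossing multi-parameter limits, I would instead use \cite{complexVHS-multi} (see also \cite{Deng2022OnTN}). This gives the multi-variable growth $\|v\|^2\sim\prod_i (\mathrm{Im}\,t^i/\mathrm{Im}\,t^{i+1})^{\ell_i-3}$ in a growth sector, with $v$ in the intersection of the weight filtrations $W(N_1+\dots+N_i)$.

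Next I would identify the physical couplings with Hodge norms, using the Weil operator expression \eqref{eq:Weil}. The 1-form gauge couplings $\mathfrak{q}^{-2}$ are the Hodge norms of charge directions $q$, and the 2-form couplings $\mathcal{Q}$ are the corresponding quantities defined in \eqref{eq:2-form}. This converts the scaling of couplings into the weight $\ell$ of the charge direction.

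The classification into types I--IV then proceeds from the limiting mixed Hodge structure, exactly as in the Calabi--Yau case, since it is purely linear-algebraic. It depends only on the Hodge numbers $h^{3,0}=1$ and $h^{2,1}=n$ and on the polarisation, both of which are present here. Type I is at finite distance. For types II, III and IV, the highest weight is $\ell_{\max}=4,5,6$ respectively, and the fastest vanishing coupling satisfies $\mathfrak{q}_{\min}^2\sim(\mathrm{Im}\,t)^{-(\ell_{\max}-3)}$. The dimension of the top weight piece fixes the multiplicities: it is $1$ for type IV, and it is $2$ for type III by the structure of the Hodge--Deligne diamond. This yields the uniqueness statement and the two independent couplings claimed in point 1.

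To express these rates in terms of geodesic distance, I would compute the Weil--Petersson metric from $e^{-K}=\|\Pi\|^2$ via the same growth theorem. In the limit, $\Pi$ lies in weight $3+d$ with $d=1,2,3$ for types II, III and IV, so $K\sim -d\log\mathrm{Im}\,t$ and $\triangle\sim\sqrt{d}\,\log\mathrm{Im}\,t/\sqrt{2}$ up to normalisation conventions. Combining this with $\mathfrak{q}^2\sim(\mathrm{Im}\,t)^{-d}$ gives the rates $\alpha_{\rm II}=1/\sqrt2$, $\alpha_{\rm III}=1$ and $\alpha_{\rm IV}=\sqrt{3/2}$ in \eqref{eq:distance-scaling}. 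The same computation with all parameters scaling homogeneously as in \eqref{eq:homo-scaling} reduces to the single-variable case for $N=\sum_i N_i$.

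For point 2, I would repeat the bound argument of \cite{Monnee:2025ynn,Etheredge:2022opl} in the growth sector. I would take the steepest direction among the hierarchically scaling parameters and use the multi-variable norm estimate. The bound $\alpha\ge 1/\sqrt2$ should then follow from convexity of the exponents $\ell_i-3$ relative to the metric coefficients.

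For point 3, the 2-form couplings correspond to Hodge norms of elements of $H^{3,0}\oplus H^{2,1}$-type directions. Their weights are bounded by the structure of $\mathrm{Gr}^W$ of the limiting mixed Hodge structure, which gives $\mathcal{Q}\succ\mathfrak{q}^2_{\min}$ in types III and IV. In type II, the weight-4 piece has a $(3,1)$ element, and this produces $\mathcal{Q}_{\min}\sim\mathfrak{q}^2_{\min}$. Mutual locality of the various $\mathcal{Q}_{\min}$ in the multi-parameter case follows from commutativity of the $N_i$ together with the polarisation.

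I expect the main obstacle to be the multi-parameter statements. The real VPHS lacks quasi-unipotent integral monodromy, so I have to check carefully that \cite{complexVHS-multi} delivers the full several-variable $\mathfrak{sl}_2$-orbit and norm estimates that are used in point 2 and in the mutual locality statement, rather than only a weaker vanishing theorem. If it does not, I would first try to restrict to the regime that the cited norm estimates cover and derive the bound there.
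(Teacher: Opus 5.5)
Your overall strategy matches the paper's. You feed the real VPHS of Section~\ref{ssec:VPHS} into \cite{complexVHS} for one parameter and \cite{complexVHS-multi} for simple normal crossing, rerun the Calabi--Yau analysis, and get the distance from $e^{-K}=\|\Pi\|^2\sim s^d$. However, you invert the dictionary between couplings and Hodge norms. By \eqref{eq:1-form}, $\mathfrak{q}_q^2=\frac12\|q\|^2$, not $\mathfrak{q}^{-2}$. The weakest couplings therefore come from the \emph{lowest} weight, $q\in\mathrm{Gr}_{3-d}$ with $\|q\|^2\sim s^{-d}$, i.e.\ $\mathrm{Gr}_0,\mathrm{Gr}_1,\mathrm{Gr}_2$ for types IV, III, II. Your top-weight charges in $\mathrm{Gr}_{3+d}$ are instead the ones whose coupling grows like $s^{d}$; in a large-volume type IV limit they are D6-like rather than D0-like. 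Your exponents and the multiplicities $1$ and $2$ come out right only because $N^d:\mathrm{Gr}_{3+d}\to\mathrm{Gr}_{3-d}$ is an isomorphism. A small aside: with $\triangle=\sqrt{d/2}\,\log s$, the quoted $\alpha_{\mathtt A}=\sqrt{d/2}$ is the rate of $\mathfrak{q}_{\min}$ itself, as in the paper's own derivation.

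The inversion is not harmless for point 3, and your sketch of point 3 has genuine gaps.

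\emph{The hierarchy is never computed.} The quantity $\mathcal{Q}_e^2=e^iG_{ij}e^j$ is the Hodge norm of the moving vector $e^iD_iV\in H^{2,1}_t$, not of a fixed element of a graded piece. So the growth theorem does not apply to it, and the remark about a $(3,1)$ element in weight 4 proves nothing. The paper instead reads it off from the K\"ahler potential you already have: $\mathcal{Q}^2_{\min}\sim G_{11}=\frac14\partial_s^2K=\frac{d}{4s^2}$. Thus $\mathcal{Q}_{\min}\sim s^{-1}$ in every type, against $\mathfrak{q}^2_{\min}\sim s^{-d}$. This gives $\mathcal{Q}_{\min}\sim\mathfrak{q}^2_{\min}$ for type II and $\mathcal{Q}\succ\mathfrak{q}^2_{\min}$ for types III and IV. Uniqueness of $\mathcal{Q}_{\min}$ in one-parameter limits is simply the presence of a single axion, hence a single $e$.

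\emph{Mutual locality needs more than commutativity plus polarisation.} The paper's argument requires two inputs:
\begin{enumerate}
\item The light charges lie in $\mathrm{im}\,N_i$, via the isomorphism $N_i:\mathrm{Gr}_{4,i}\to\mathrm{Gr}_{2,i}$, which \cite{complexVHS} provides without quasi-unipotency.
\item The intersection is still of type II. Then $N_1^2=N_2^2=(N_1+N_2)^2=0$, and with $[N_1,N_2]=0$ this gives $N_1N_2=0$.
\end{enumerate}
Then $(N_1x_1,N_2x_2)=-(x_1,N_1N_2x_2)=0$ by invariance of the pairing. Without (2), $N_1N_2\neq0$ and the pairing need not vanish. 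Even when $N_1N_2=0$, your top-weight representatives break the argument. A complement of $W_3(N_1)=\ker N_1$ pairs non-degenerately with $\mathrm{im}\,N_1$, and nothing forces it to be orthogonal to a complement of $\ker N_2$.
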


We will now justify this claim by extending the reasoning of~\cite{Grimm:2018ohb,Grimm:2018cpv,Corvilain:2018lgw,Grimm:2019wtx,Grimm:2019bey,Grimm:2020cda,Bastian:2020egp,Grimm:2021ikg,Bastian:2021eom,Bastian:2021hpc,Grimm:2021ckh,Hassfeld:2025uoy,Monnee:2025ynn,Monnee:2025msf} to general 4d ${\cal N}=2$ supergravity theories. The basis for this extension are the recent mathematical works \cite{complexVHS,Deng2022OnTN,complexVHS-multi}. These prove the classical results of \cite{schmid,CKS} underlying the analysis of limits in Calabi-Yau compactifications for more general complex variations of Hodge structure without quasi-unipotent monodromy as realised in general 4d ${\cal N}=2$ supergravity theories. We analyse one-parameter limits and simple normal crossing multi-parameter limits in turn.

\subsection{Single-parameter limits}\label{ssec:single}
Infinite distance limits in the vector multiplet moduli space $M_n$ correspond to boundaries of $M_n$. In the language of asymptotic Hodge theory, these boundaries turn out to be singularities at which the Hodge decomposition~\eqref{eq:Hodge-decomp} of the fibers of $H_\mathbb{C}\to M_n$ breaks down~\cite{schmid,CKS,complexVHS,complexVHS-multi,Grimm:2018ohb,vandeHeisteeg:2022gsp,Monnee:2024gsq}. Each one-parameter limit leads to a single component of the boundary divisor of $M_n$ and is therefore referred to as a codimension-one singularity. Multi-parameter limits correspond to the intersection of several (one-parameter) boundary components and hence to a singularity of higher codimension. They will be studied in the next section.

Around a one-parameter singularity, the relevant local model is the punctured disc $\Delta^\ast\subset\mathbb{C}$, with $0\in\Delta$ corresponding to the singularity. For $z$ the local coordinate on $\Delta^\ast$, we can consider the monodromy action $z\to e^{2\pi\ii}z$, which defines a monodromy operator $T$ acting on the entries of the Hodge filtration $F^\bullet$. The Jordan decomposition of $T$ reads $T=T_sT_u$ with a semi-simple part $T_s$ and a unipotent part $T_u$. In absence of an underlying integral structure, the eigenvalues of $T$ are only known to lie on the unit circle but do not necessarily have to be roots of unity~\cite{complexVHS}. In other words, the semi-simple part $T_s$ can have infinite order and is therefore not removable by a finite base change as done in~\cite{schmid,CKS}. A priori, this introduces a dependence of the limiting mixed Hodge structure at $z=0$ not only on the unipotent part $T_u$, but also on the semi-simple part $T_s$ of the monodromy operator.\footnote{For details on the precise definition and construction of the limiting mixed Hodge structure we refer to the original works~\cite{schmid,CKS,complexVHS,complexVHS-multi} as well as the reviews~\cite{Monnee:2024gsq,vandeHeisteeg:2022gsp}.} However, as shown in~\cite{complexVHS}, 
the limiting mixed Hodge structure still turns out to be independent of the semi-simple part $T_s$ and is therefore fully characterised by the unipotent part $T_u=e^N$ with $N$ the nilpotent log-monodromy matrix. The important upshot of this technical point is that even without an underlying integral structure, there is a Deligne splitting of $H_\mathbb{C}$ associated with the limiting mixed Hodge structure,
\begin{equation}\label{eq:Deligne}
    H_\mathbb{C}=\bigoplus_{0\leq p,q\leq 3}I^{p,q}\,,
\end{equation}
where the dimensions $\dim_\mathbb{C}I^{p,q}=i^{p,q}$ satisfy $h^{p,3-p}=\sum_q i^{p,q}$ and further symmetry relations, which follow from the explicit construction~\cite{CKS,vandeHeisteeg:2022gsp,Monnee:2024gsq,complexVHS,complexVHS-multi}. Since $h^{3,0}=1$ (recalling from \eqref{eq:Hodge-decomp} that by construction, $H^{3,0}_t={\rm span}_\mathbb{C}\{V\}$), it follows that only a single $i^{3,q}$ can be non-vanishing. This gives four types of limits known as types I,\dots, IV for $q=0,\dots,3$, familiar in the string theory literature from the special case of Calabi--Yau compactifications \cite{Grimm:2018cpv,Grimm:2018ohb,vandeHeisteeg:2022gsp,Monnee:2024gsq}. In the following we focus on the infinite distance limits of type II, III and IV. We can immediately borrow the analogous results from the Calabi--Yau case, whose most essential points we briefly collect here for the reader's benefit. We refer to~\cite{vandeHeisteeg:2022gsp,Monnee:2024gsq} and references therein for details and derivations.

From the additional symmetries of the $i^{p,q}$ it follows that there is only a single further independent dimension, which by convention is chosen to be $i^{2,2}$. This denotes the secondary singularity type of the degeneration.\footnote{This finer distinction will not be of relevance to us. See~\cite{Hattab:2025aok,Monnee:2025msf,Hattab:2026lho,cgm} for recent discussions on subtleties related to type II$_0$ loci.} Furthermore, the log-monodromy matrix can be seen as a map~\cite{schmid}
\begin{equation}\label{eq:log-monodromy}
    N:I^{p,q}\to I^{p-1,q-1}\,,
\end{equation}
showing that $N^{d+1}=0$ for $d\leq 3$. More precisely, for limits of type II, III, IV, we have
\begin{equation} \label{eq:d-values}
d_{\rm II}=1 \,, \qquad d_{\rm III}=2 \, \qquad  d_{\rm IV}=3\,.
\end{equation}
Based on the Deligne splitting~\eqref{eq:Deligne} one defines the graded spaces ${\rm Gr}_\ell$ as 
\begin{equation}\label{eq:graded}
    {\rm Gr}_\ell=\bigoplus_{p+q=\ell}I^{p,q}\,,
\end{equation}
each of which carries a pure Hodge structure of weight $\ell$~\cite{CKS,complexVHS}. As we will see, of special importance for our analysis are the spaces ${\rm Gr}_2, {\rm Gr}_1, {\rm Gr}_0$, which for the different types of limits are of the following dimensions (with parameters $b$, $c$ and $d$ identified with $i^{2,2}$)~\cite{Grimm:2018cpv}:

\begin{eqnarray}\begin{split}
    {\rm type \,\,  II}_b: \quad   & \dim_\mathbb{C}({\rm Gr}_2) = 2 +b \,, & \quad  \dim_\mathbb{C}({\rm Gr}_1)=0 \,, \quad  \dim_\mathbb{C}({\rm Gr}_0) = 0   \,, \\
    {\rm type \,\,  III}_c: \quad   & \dim_\mathbb{C}({\rm Gr}_2) = c \,, & \quad  \dim_\mathbb{C}({\rm Gr}_1) = 2 \,, \quad  \dim_\mathbb{C}({\rm Gr}_0) = 0   \,, \\
    {\rm type \,\,  IV}_d: \quad   & \dim_\mathbb{C}({\rm Gr}_2) = d \,, &\quad  \dim_\mathbb{C}({\rm Gr}_1) = 0 \,, \quad  \dim_\mathbb{C}({\rm Gr}_0) = 1   \,.
\end{split} \label{eq:LimGr}
\end{eqnarray}
In particular, the minimal value of $\ell$ such that $\rm Gr_{\ell_{\rm min}} \neq 0$ is given by 
\begin{equation} \label{eq:lmin}
\ell_{\rm min} = 3 - d
\end{equation}
for $d$ as listed in \eqref{eq:d-values}.

\paragraph{Asymptotic K\"ahler potential.} The local model near a codimension one singularity in $M_n$ is the punctured disk $\Delta^\ast$ with local coordinate $z$. A single-parameter limit $z\to0$ therefore corresponds to a codimension one singularity in moduli space $M_n$ along the divisor $D=\{z=0\}$. The local coordinate $z$ is related to the physical scalars $t^i$ entering the action~\eqref{eq:4d-action} via
\begin{equation}\label{eq:covering-coords}
    t^i=\frac{e^i}{2\pi\ii}\log(z)\,,
\end{equation}
where the role of the vector $e=(e^i)\in\mathbb{R}^n$ is explained around~\eqref{eq:2-form}. For single-parameter infinite distance limits, only a single entry $e^1>0$ of the vector $e$ is non-vanishing and has furthermore to be positive. The corresponding vector multiplet scalar is $t^1$ and the divisor $D=\{z=0\}$ is reached as $t^1\to\ii\infty$. In this regime, the period vector $\Pi$ from~\eqref{eq:period-vector} can be approximated using the nilpotent orbit theorem~\cite{complexVHS,schmid} as
\begin{equation}\label{eq:nilpotent-single}
    \Pi=e^{Nt^1}\left(\mathbf{a}_0+\sum_{k}\mathbf{a}_k e^{2\pi\ii k t^1}\right)\,,
\end{equation}
where the coefficient vectors $\mathbf{a}_k$ depend on the coordinates $t^j$ with $j>1$ which stay finite along the limit. Since $N$ is nilpotent of order $d\leq4$, recall~\eqref{eq:log-monodromy}, the first exponential is in fact a polynomial in $t^1$ of degree at most three. Together with~\eqref{eq:sg-Kpot} and~\eqref{eq:Hodge-Kpot} this implies that any polynomial part in both the K\"ahler potential and prepotential can be at most cubic in $t^1$. We will discuss the cubic prepotential in more detail in Section~\ref{sec:positive}. Furthermore, via~\eqref{eq:Hodge-Kpot}, the K\"ahler potential $K$ is independent of the real scalar ${\rm Re}(t^1)$ to polynomial order in $t^1$, i.e. up to exponentially suppressed terms.\footnote{In fact, $K$ stays independent of ${\rm Re}(t^1)$ also when including the exponential terms in~\eqref{eq:nilpotent-single}. This follows because the inner product used in~\eqref{eq:Hodge-Kpot} is monodromy invariant and, as we discuss in more detail around~\eqref{eq:2-form}, the scalar ${\rm Re}(t^1)$ encodes this monodromy. The expansion~\eqref{eq:nilpotent-single} is of course valid only asymptotically.} Accordingly, this scalar enjoys a shift symmetry close to the asymptotic regime and is therefore referred to as an axion.

\paragraph{Scaling of physical couplings.} To study the physics associated to single-parameter infinite distance limits in $M_n$, the formulation of the boundary data in terms of the graded spaces \eqref{eq:graded} is particularly useful~\cite{Grimm:2018ohb,Grimm:2018cpv,Grimm:2019bey,Grimm:2019wtx,Hassfeld:2025uoy,Monnee:2025ynn}. This is because the growth theorem of~\cite{CKS,complexVHS} states that near $t^1\to\ii\infty$, 
\begin{equation}
\label{eq:growth-thm-single}
    \|q\|^2\sim{\rm Im}(t^1)^{\ell-3}\, \qquad {\rm for} \,  q\in{\rm Gr}_\ell \,,
\end{equation}
where $\|\cdot\|$ denotes the Hodge norm introduced in~\eqref{eq:Hodge-norm}. In particular, the Hodge norms vanishing at the fastest rate for ${\rm Im}(t^1) \to \infty$ are associated with $q \in {\rm Gr}_{\ell_{min}}$ with $\ell_{min}=2,1,0$ for type II, III, IV limits (see \eqref{eq:lmin}). 

To connect this to the couplings appearing in the action~\eqref{eq:4d-action}, recall from the discussion following~\eqref{eq:magnetic-G} that the symplectic bundle $H\to M_n$ encodes the splitting into electric and magnetic gauge sectors. In particular, the action~\eqref{eq:4d-action} only depends on the electric gauge fields $A^I$. An element $q\in H_{\mathbb{C},t}\cap\mathbb{R}^{2n+2}$ in the fiber of $H_\mathbb{C}\to M_n$ over $t\in M_n$ can naturally be written as $q=(p^I,q_I)$ with the first $n+1$ entries labeling the magnetic gauge fields $G_I$ and the latter $n+1$ entries the electric gauge fields $F^I$. Following~\cite{Ceresole:1995ca}, see also~\cite{Palti:2017elp,Bastian:2020egp}, we define the {\it physical 1-form coupling} $\mathfrak{q}_q$ associated to an element $q\in H_{\mathbb{C},t}\cap\mathbb{R}^{2n+2}$ as 
\begin{equation}\label{eq:1-form}
    \mathfrak{q}_q^2=-\frac{1}{2}q^T\cM q=\frac{1}{2}(q,Cq)=\frac{1}{2}\|q\|^2\,.
\end{equation}
For example, working with the action~\eqref{eq:4d-action}, for purely electric vectors $q=(0,q_I)$  the physical 1-form coupling takes the form
\begin{equation}
    \mathfrak{q}_q^2=-\frac{1}{2}q_I\left(\left({\rm Im}\mathcal{N}\right)^{-1}\right)^{IJ}q_J\,.
\end{equation}
As recalled around~\eqref{eq:nilpotent-single}, the prepotential $\mathbb{z}$ near the infinite distance singularity $t^1\to\ii\infty$ in $M_n$ is independent of the axion ${\rm Re}(t^1)$. In these regimes, the axion ${\rm Re}(t^1)$ may be dualised into an ``electric'' two-form potential $B_1$. Similar to~\eqref{eq:1-form}, we define the associated {\it physical 2-form coupling} as
\begin{equation}\label{eq:2-form}
    \mathcal{Q}_e^2=e^iG_{ij}e^j\,,
\end{equation}
where $G=(G_{ij})$ is the moduli space metric written in terms of the real coordinates ${\rm Re}(t^i),\,{\rm Im}(t^i)$. The vector $e=(e^i)\in\mathbb{R}^n$, introduced in~\eqref{eq:covering-coords}, collects the shift(s) of the axion ${\rm Re}(t^1)$ around the singularity $z=0$. In other words, under the monodromy $z\to e^{2\pi\ii}z$ the axion ${\rm Re}(t^1)$ shifts as $t^1\to t^1+e^1$.

In view of the growth theorem~\eqref{eq:growth-thm}, the splitting~\eqref{eq:Deligne} of $H_\mathbb{C}$ into graded spaces in the infinite distance limit reflects the possible scaling behaviour of the associated physical 1-form couplings. The scaling of physical 2-form couplings can likewise be determined from the growth theorem via
\begin{equation}
    e^{-K}=\|\Pi\|^2\sim {\rm Im}(t^1)^{d}\,,
\end{equation}
for the values $d=1,2,3$ (see \eqref{eq:d-values}) for a degeneration of type II, III, IV. It follows that
\begin{equation}
    \cQ_{\rm min}^2\sim G_{11}=\frac{1}{4}\frac{\partial^2 K}{\partial\left({\rm Im}(t^1)\right)^2}
    =\frac{1}{4}\frac{d}{\left({\rm Im}(t^1)\right)^2}\,.
\end{equation}
Compared with the scaling of $\mathfrak{q}_{\rm min}$ given in~\eqref{eq:1-form} and~\eqref{eq:growth-thm-single} one finds that all single-parameter limits realise the parametric hierarchy
\begin{equation}\label{eq:hierarchy-couplings}
    \mathcal{Q}_{\rm min}\succsim\mathfrak{q}^2_{\rm min}\,.
\end{equation}
In type II limits, $\mathfrak{q}^2_{\rm min}\sim\mathcal{Q}_{\rm min}$, while $\mathcal{Q}_{\rm min}\succ\mathfrak{q}^2_{\rm min}$ in limits of type III and IV.
Furthermore, using the shorthand notation $s^1={\rm Im}(t^1)$, the geodesic distance $\triangle$ on moduli space is computed to be
\begin{equation}
    \triangle=\int_1^\lambda\,{\rm d}s^1\,\sqrt{\frac{1}{2}\frac{{\rm d}^2K}{{\rm d}(s^1)^2}}=\sqrt{\frac{d}{2}}\log\lambda\,.
\end{equation}
From~\eqref{eq:growth-thm-single} it then follows that
\begin{equation}
    \mathfrak{q}_{\rm min}\sim (s^1)^{(\ell-3)/2}=\exp\left(-\sqrt{\frac{d}{2}}\triangle\right)\equiv\exp\left(-\alpha_\mathtt{A}\triangle\right)\,,
\end{equation}
where $\alpha_{\rm II}=1/\sqrt{2}$, $\alpha_{\rm III}=1$ and $\alpha_{\rm IV}=\sqrt{3/2}$.  Section~\ref{ssec:ESC} interprets this behaviour in light of the Emergent String Conjecture~\cite{Lee:2019wij}. We stress that all single-parameter infinite distance limits in every 4d $\cN=2$ vector multiplet moduli space automatically satisfy the integer scaling relations of~\cite{Lanza:2021udy} recently analysed more generally in~\cite{Grieco:2025bjy,Grieco:2026yip,Etheredge:2026jun}.

\subsection{Multi-parameter limits of simple 
normal 
crossing}\label{ssec:moderatelynormalcrossing}
Moving on to multi-parameter limits, we are faced with the important question of the local model of such a singularity. While for single-parameter limits it is clear that the VPHS close to a singularity undergoes a degeneration over a punctured disk $\Delta^\ast$, in the multi-parameter case the local model heavily depends on the singularity structure of (the components of) the boundary divisor of a compactification of $M_n$. While there is a precise answer in the geometric setting~\cite{viehweg}, the situation is, a priori, less clear from the pure supergravity perspective. However, precise statements can be made for those limits along which the vector multiplet moduli space $M_n$ can be resolved to a simple normal crossing form:\footnote{This assumption also underlies the analysis of~\cite{CKS,complexVHS-multi}.} 

\begin{definition}
    {\rm (Simple\,\,normal\,\,crossing)}\\ A multi-parameter limit ${\rm Im}(t^i)\to\infty$, $i=1,\dots,s$, is said to be {\rm of\,\,simple\,\,normal\,\,crossing\,\,form} if the corresponding local model is $\left(\Delta^\ast\right)^s\times\Delta^t$ with $s+t=n$. In other words, for local coordinates $z^i$ defined as in~\eqref{eq:covering-coords}, the singular locus of $M_n$ is given by $z^1\dots z^s=0$.
\end{definition}

For now we will restrict ourselves to multi-parameter limits of simple normal crossing type and we will come back to the physical significance of this condition in more detail in Section~\ref{ssec:SNC}. The limit ${\rm Im}(t^i)\to\infty$ for $i=1,\dots,s$ is characterised by a nilpotent log-monodromy matrix $N=N_1+\dots N_s$, which is subject to the same constraints as in the single-parameter case. Thus, also multi-parameter limits can be classified into (primary singularity) types I, \dots, IV. Moreover, the simple normal crossing property implies that the various log-monodromy matrices $N_i$ commute, $[N_i,N_j]=0$ for $i\neq j$.

To study the scaling of the physical couplings near higher codimension infinite distance boundaries, we use the multi-parameter generalisation of~\eqref{eq:nilpotent-single} and~\eqref{eq:growth-thm-single} established for quasi-unipotent monodromy in~\cite{CKS} and for more general complex VPHS in~\cite{complexVHS-multi}. 
More precisely, near the infinite distance degeneration $t^i\to\ii\infty$ with $i=1,\dots,s$, the period vector $\Pi$ can be approximated as
\begin{equation}\label{eq:nilpotent} \Pi=\exp\left[\sum_{i=1}^sN_it^i\right]\left(\mathbf{a}_0+\sum_{r_i\geq0}\mathbf{a}_{r_1\dots r_s}e^{2\pi\ii r_it^i}\right)\,,
\end{equation}
where the coefficients $\mathbf{a}$ depend on the coordinates $t^j$, $j=s+1,\dots,n$, which are not scaled to infinity in the limit and the sum is taken over all non-zero multi-indices of length $s$. As in the single-parameter case it follows that the scalars ${\rm Re}(t^i)$ enjoy axionic shift symmetries. The scaling of the Hodge norm of $q\in{\rm Gr}_{\ell_1}\cap...\cap{\rm Gr}_{\ell_s}$ in the limit $t^i\to\ii\infty$, $i=1,\dots,s$, is given by
\begin{equation}\label{eq:growth-thm}
    \|q\|^2\sim\prod_{i=1}^s\left(\frac{{\rm Im}(t^{k_i})}{{\rm Im}(t^{k_{i+1}})}\right)^{\ell_i-3}\,,
\end{equation}
where the ordering $k_1,\dots,k_s$ depends on the chosen growth sector,
\begin{equation}
    \cR_{1\dots s}=\bigg\{\,t^i\,\,\bigg\vert\,\frac{{\rm Im}(t^{k_1})}{{\rm Im}(t^{k_2})},\frac{{\rm Im}(t^{k_2})}{{\rm Im}(t^{k_3})},\dots,\frac{{\rm Im}(t^{k_{s-1}})}{{\rm Im}(t^{k_s})},{\rm Im}(t^{k_s})>\gamma>1\bigg\}\,.
\end{equation}
Using the local model $(\Delta^\ast)^s\times\Delta^t$ of simple normal crossing boundaries it was established in~\cite{CKS,complexVHS-multi} that there is a Deligne splitting~\eqref{eq:Deligne} for all infinite distance limits of this type and therefore also a notion of graded spaces as in~\eqref{eq:graded}. As before, the explicit construction of the spaces $I^{p,q}$ implies \eqref{eq:LimGr} and \eqref{eq:lmin}. From these expressions one can again derive the scaling behaviour of the couplings and show that the physical 1- and 2-forms defined in~\eqref{eq:1-form} and~\eqref{eq:2-form} satisfy the parametric hierarchy~\eqref{eq:hierarchy-couplings}. Furthermore, in limits where all parameters scale at the same rate $\lambda$ (so-called EFT string limits \cite{Lanza:2021udy}),
\begin{equation}\label{eq:homo-scaling}
  {\rm Im}(t^i) \sim e^i  \,   \lambda\, \to \infty \,,
\end{equation}
the scalings  \eqref{eq:distance-scaling} can be established. More general simple normal crossing multi-parameter limits have been analysed in detail in~\cite{Monnee:2025ynn} in the context of Calabi--Yau compactifications; these results likewise carry over immediately to the supergravity setting, which in particular includes the lower bound $\alpha\geq \frac{1}{\sqrt{2}}$ on the exponential rate of $\mathfrak{q}_{\rm min}$.

\subsection{A comment on instanton corrections}\label{ssec:instantons}
The real VPHS underlying the vector multiplet moduli space of a 4d $\cN=2$ supergravity is a consequence of classical supersymmetry. Imposing supersymmetry also at the level of the quantised theory likewise gives rise to a VPHS. However, as shown in a simple example in~\cite{Kaufmann:2026fli,Kaufmann:2026mha} (based on results of~\cite{Camara:2008zk,Berg:2005ja}), it is in general not true that this quantum VPHS coincides with the classical VPHS. Even if the fully quantum corrected VPHS may in general not be known, the results of the previous sections guarantee that it obeys the properties in Claim \ref{Claim1} at infinity.\footnote{In particular, there must exist flat coordinates in which the expected infinite distance behaviour is manifest. This is a notable difference compared to 4d $\cN=1$ theories \cite{Kaufmann:2026fli,Kaufmann:2026mha,Kaufmann:2026tsy}.} 

Whether quantum corrections are relevant near a given infinite distance regime in moduli space is of course hard (if not impossible) to answer in general from a pure supergravity perspective. Nevertheless, the existence of certain non-perturbative corrections to the K\"ahler potential and prepotential are already encoded in the classical Hodge theory on $M_n$.  As was shown in~\cite{Bastian:2021eom}, near all infinite distance limits with $s=n$ in previous notation (except those of type IV$_n$), the metric derived from the K\"ahler potential~\eqref{eq:Hodge-Kpot} becomes singular. This can be traced back to the fact that in these limits the set of derivatives $\{D_i\Pi\}$ of the period vector fails to be linearly independent, i.e. $\dim\left({\rm span}_\mathbb{C}\{D_i\Pi\}\right)<n$. The only way to cure this singularity in the metric is to include the leading (but exponentially suppressed) terms in the nilpotent orbit expansion~\eqref{eq:nilpotent} of the period vector.\footnote{We stress again that while the analysis of~\cite{Bastian:2021eom} is phrased in terms of Calabi--Yau geometry, the presence of these singularities is a statement solely about the underlying VPHS which is already available at the level of the supergravity.  Indeed, as was exemplified for small $n$ in~\cite{Bastian:2021eom}, the period vector $\Pi$ in~\eqref{eq:period-vector} is completely specified (in the nilpotent or $\mathfrak{sl}(2)$-orbit approximation) by the type of infinite distance degeneration with no further reference to any underlying Calabi--Yau geometry.\label{fn:CYperiods}} Put differently, also exponential contributions to the prepotential $\mathbb{z}$ are captured by the supergravity. As pointed out in~\cite{Bastian:2021eom}, even though one can infer a contribution to the prepotential of the form  
\begin{equation}
    \mathbb{z}_{\rm n.p.}=\sum_{k\geq0}n_k e^{2\pi\ii\alpha_{i,k}t^i}\,,
\end{equation}
a clear interpretation of it is not readily available from the supergravity perspective.\footnote{The need for quantum corrections in the present setup was also pointed out in~\cite{Cecotti:2020rjq}. Whereas this reference argues for a form of ``completeness'' of non-perturbative corrections based on the therein proposed Structural Criterion, \cite{Bastian:2021eom} deduces the existence of corrections exclusively (as emphasised in Footnote~\ref{fn:CYperiods}) based on the underlying VPHS with no further UV/ quantum gravity input.}

\subsection{Compatibility with the Emergent String Conjecture}\label{ssec:ESC}
Based on the VPHS underlying the vector multiplet moduli space of any 4d $\cN=2$ supergravity theory, the scalings \eqref{eq:distance-scaling} and \eqref{eq:hierarchy-couplings} of the physical 1- and 2-form couplings have been derived in all infinite distance limits of simple normal crossing type in $M_n$. An analogous result was established for 5d ${\cal N}$=1 supergravity theories in~\cite{Kaufmann:2024gqo,david-vicente}.\footnote{Ref. \cite{Kaufmann:2024gqo} proves this with the help of a certain completeness hypothesis regarding the spectrum of supergravity strings \cite{Katz:2020ewz}, while  \cite{david-vicente} shows that the result holds already in supergravity without further assumptions.} In this section we discuss how these scalings are related to the Emergent String Conjecture \cite{Lee:2019wij}. Much of the following discussion closely follows  analogous observations in the context of Calabi--Yau compactifications, in particular the discussion in \cite{Monnee:2025ynn}, which we generalise to the present setting and briefly summarise for completeness.

\paragraph{Single-parameter limits.} 
By comparison with \eqref{scaling-gen}, the exponential rates $\alpha_{\mathtt{A}}$ appearing in the scaling of the gauge couplings~\eqref{eq:distance-scaling} match precisely the values expected for asymptotically vanishing gauge couplings in emergent string limits ($\mathtt{A}=$II), decompactification limits to 6d ($\mathtt{A}=$III) and decompactification limits to 5d ($\mathtt{A}=$IV)~\cite{Agmon:2022thq,Etheredge:2022opl}. In limits of type II$_b$, the fastest vanishing gauge couplings are parametrised by a basis $q^i_{\rm II} \in {\rm Gr}_2$, $i = 1, \ldots 2 +b$, satisfying $\|q^i_{\rm II}\|\sim\mathfrak{q}_{\rm min}$. They correspond to the perturbative parts of the gauge group which become weakly coupled in an asymptotic weak coupling limit of a dual fundamental string theory (the emergent string of the limit). In limits of type III, $q^i_{\rm III} \in {\rm Gr}_1$, $i=1,2$, give rise to $\|q^i_{\rm III}\|\sim\mathfrak{q}_{\rm min}$ which are interpreted as the two Kaluza-Klein (KK) gauge couplings for decompactification along a torus to 6d, while in limits of type IV, $\mathfrak q_{\rm min}$ corresponds to the single generator $ q_{\rm IV} \in {\rm Gr}_0$, which can be identified as the KK gauge coupling for a decompactification to 5d.

Furthermore, the couplings also determine the masses and tensions of suitable BPS objects which, \emph{if} they exist in the theory, identify the nature of the limit. First, as explained in~\cite{Lanza:2021udy}, the minimal physical 2-form coupling sets the tension of an EFT string inducing the limit,
\begin{equation}\label{eq:EFT-tension}
    \frac{\mathcal{Q}_{\rm min}}{M_{\rm Pl}^2}\sim \frac{T_{\rm EFT}}{M_{\rm Pl}^2}=-e^i\frac{\partial K}{\partial s^i}\,.
\end{equation}
Here $e=(e^i)$ is the vector introduced around~\eqref{eq:covering-coords}, now interpreted as the magnetic charge vector of the EFT string, in particular $e\in\mathbb{Z}^{n}$. The EFT string is a solitonic solution of the supergravity which becomes tensionless in the limit. The backreaction on the moduli fields close to its core induces the infinite distance limit. Second, the minimal physical 1-form coupling corresponds to the mass of a (so far hypothetical) BPS particle,
\begin{equation}\label{eq:BPS-mass}
    \frac{m_{\rm BPS}}{M_{\rm Pl}}\sim\mathfrak{q}_{\rm min}\,.
\end{equation}
Of course, the existence of these BPS objects is by no means guaranteed from the supergravity perspective alone and, in fact, represents the actual non-trivial content of the Distance and Emergent String Conjecture (at least in the context of 4d ${\cal N}=2$ theories due to the analysis of the previous sections). For our purposes it is ensured only by an appropriate version of the completeness hypothesis~\cite{Polchinski:2003bq,Banks:2010zn}. This completeness assumption necessitates an \emph{integral} structure underlying the VPHS on $H_\mathbb{C}\to M_n$ as we will discuss in the next subsection.

However, the key point is the following: If we assume the existence of an EFT string of tension $T_{\rm EFT}$ and of a \emph{tower} of BPS particles of mass scale $m_{\rm BPS}$, then the scaling behaviour of $T_{\rm EFT}$ and $m_{\rm BPS}$, as established from the supergravity analysis alone, is precisely as required 
for the interpretation of type II limits as emergent string limits and type III and type IV limits as decompactification limits to six respectively five dimensions. This interpretation is corroborated by the following observations:
\begin{itemize}
    \item In a single-parameter type II limit $t^1\to\ii\infty$ there is only a single axion ${\rm Re}(t^1)$ which can be dualised into a 2-form to which an EFT string couples. This implies in particular that the EFT string whose tension scales as $\cQ_{\rm min}$ is unique and can play the role of the unique fundamental string becoming tensionless in the limit, as required by the Emergent String Conjecture. The relation $\cQ_{\rm min} \sim \mathfrak{q}^2_{\rm min}$ furthermore implies $T_{\rm EFT} \sim m^2_{\rm BPS}$, in agreement with the interpretation of the BPS particle tower as an accomanying KK or winding tower of the tensionless string.
    \item In limits of type III, the lightest states have charge vectors in the two-dimensional space ${\rm Gr}_1$ (recall the discussion around~\eqref{eq:graded}), thereby hosting the putative two KK towers at mass scale $m_{\rm BPS}$ in a decompactification limit to 6d. Similarly, the lightest states in a type IV limit have charge vectors lying in the 1-dimensional space ${\rm Gr}_0$.
\end{itemize}
In string theoretic constructions of 4d $\cN=2$ supergravities, the existence of the EFT strings and towers, the fact that in type II limits the EFT string is a critical string, and the identification of the BPS particle towers as KK/winding towers follow from the details of the geometry, as shown in detail for Type II Calabi--Yau compactifications in \cite{Lee:2019wij,Hassfeld:2025uoy,Monnee:2025ynn}.

\paragraph{Multi-parameter limits.}
This discussion carries over to multi-parameter limits under the assumption of simple normal crossing. At first sight, it therefore seems that all infinite distance limits in $M_n$ are consistent with the Emergent String Conjecture. However, there is a subtlety related to multiple single-parameter type II limits which do not intersect any type III or IV divisor and also at their intersection do not enhance to type III or IV. For these limits to be consistent, the emergent string has (in particular) to be unique \cite{Lee:2019wij}. At the level of the supergravity, this uniqueness amounts to the uniqueness of the charge vector $e$ realising the scaling $\mathcal{Q}^2_e\sim\mathcal{Q}^2_{\rm min}$. While in single-parameter type II limits uniqueness follows automatically, in the multi-parameter case there are multiple EFT string charge vectors $e_j=(\delta^i_j)$, $j=1,\dots,s$, for which all $\cQ_{e_j}^2$ sit at the same scale $\cQ_{\rm min}^2$.\footnote{The analogues of type II limits for 5d $\cN=1$ supergravities were termed tensor limits in~\cite{Kaufmann:2024gqo} and uniqueness of the leading physical 2-form coupling was proven in all such limits. More precisely, in the language of~\cite{Kaufmann:2024gqo}, multi-parameter limits are limits with $|{\cal J}_\lambda| > 1$, which are always vector limits (i.e. limits in which all ${\cal Q }\succ {\mathfrak q}_{\rm min}^2$). This means that the analogue of an enhancement ${\rm II} + {\rm II} \rightarrow {\rm II}$ (analysed in detail in~\cite{cgm}) does not exist in 5d. This is consistent with the above observation because the corresponding enhancements in 4d are not connected to a type III or IV degeneration.}

In fact, this problem also arises in string theoretic constructions of 4d $\cN=2$ supergravities as Calabi-Yau compactifications~\cite{Hassfeld:2025uoy}. To see in what sense uniqueness of the charge vector $e$ realising scaling of $\cQ_{\rm min}$ is guaranteed also in multi-parameter type II$_b$ limits, we follow the argument of~\cite{Hassfeld:2025uoy} and show that all physical 1-form couplings $\mathfrak{q}^2_i\sim\mathfrak{q}^2_{\rm min}\sim\cQ_{\rm min}$, $i=1,\dots,b+2$, are mutually electric. For notational simplicity we consider only a subset of two charge vectors $q_i\in{\rm Gr}_{2,i}$, $i=1,2$, realising $\mathfrak{q}^2_i\sim\mathfrak{q}^2_{\rm min}$ via~\eqref{eq:1-form}. Due to the simple normal crossing property of the type II$_b$ limit, the log-monodromy matrices $N_1$ and $N_2$ satisfy
\begin{equation}
    N_1^2=N_2^2=0\,,\quad (N_1+N_2)^2=2N_1N_2=0\, \label{eq:N1N2cond}
\end{equation}
since $[N_1,N_2]=0$. Even without quasi-unipotent monodromy it is true that $N_{1,2}$ induce linear isomorphisms $N_j:{\rm Gr}_{4,j}\to{\rm Gr}_{2,j}$~\cite{complexVHS} and hence
\begin{equation}
    (q_1,q_2)=(N_1\bar q_1,N_2\bar q_2)=-(\bar q_1,N_1N_2\bar q_2)=0\,,
\end{equation}
where $(\cdot,\cdot)$ denotes the symplectic pairing introduced in Section~\ref{ssec:global}.\footnote{The second equality comes from the fact that $(\cdot,\cdot)$ is monodromy invariant. Furthermore, as shown in \S61 of~\cite{complexVHS}, the pairing $(\cdot,\cdot)$ is actually invariant under the semisimple part of the monodromy. Writing $T_u=e^N$ and using that $(\cdot,\cdot)$ is conjugate linear in the second argument yields the second equality above.}
In light of the discussion around~\eqref{eq:magnetic-G}, the pairing $(\cdot,\cdot)$ can be interpreted as a Dirac pairing and so the vanishing of $(q_1,q_2)$ indeed shows that the corresponding physical 1-form couplings are mutually electric. In other words, the corresponding gauge theories ${\rm U}(1)_1$ and ${\rm U}(1)_2$ are mutually local. This is a pure supergravity statement.

As in~\cite{Hassfeld:2025uoy}, to deduce some form of uniqueness of the charge vector $e$ (and hence the putative emergent string), a pure supergravity analysis is not sufficient and further quantum gravity input (related to the Quantum Gravity Principles~\ref{Swamp1} and~\ref{Swamp2} below) is needed. Concretely, we assume that all involved single-parameter type II limits are indeed emergent string limits. For the situation described above this means that there are fundamental strings with charge vectors $e_{1,2}$ (which realise $\cQ_{1,2}\sim\cQ_{\rm min}$) as well as BPS particle states charged under the dual perturbative gauge groups $G_1$ and $G_2$ in the respective single-parameter limits.\footnote{The gauge couplings of the perturbative gauge groups $G_{1}$ and $G_2$ in the stringy dual frame are of course identified with $\mathfrak{q}_{1,2}\sim\mathfrak{q}_{\rm min}$.} The two duality frames can be compatible only if the associated gauge theories with gauge groups $G_{1}$ and $G_2$ are mutually local in the above sense. The vanishing of the Dirac pairing $(q_1,q_2)$ discussed before is therefore a necessary condition for compatibility of multi-parameter type II limits with the Emergent String Conjecture. Note that the (simple) normal crossing condition, which is responsible for \eqref{eq:N1N2cond}, was key to show mutual compatibility of the two duality frames. This, in fact, holds much more generally, as we will discuss now.

\section{The 4d \texorpdfstring{$\cN=2$}{N=2} vector multiplet Swampland}\label{sec:swamp}
The observations of the previous section raise the question to what extent the characteristic predictions of the Emergent String Conjecture are an automatic consequence of supersymmetry alone rather than a non-trivial quantum gravity constraint. While for the single-parameter limits all necessary information needed to show Claim \ref{Claim1} is indeed available from the geometry of the supergravity moduli space $M_n$ alone without further input, the analysis of multi-parameter limits is based on the restriction to simple normal crossing degenerations. This restriction is, in fact, a non-trivial one; we will illustrate this in Example~\ref{ex:non-snc} below. So far, the simple normal crossing condition has appeared as a {\it sufficient} technical condition for the results of Section \ref{sec:Asymptotics} to follow from the VPHS also for multi-parameter limits~\cite{Deng2022OnTN,complexVHS-multi}. In this section we will propose that it is, in fact, also {\it necessary} for consistency with the Distance and Emergent String Conjecture. 

In fact, we will put forward two quantum gravity principles for a 4d $\cN=2$ supergravity theory to be consistent with quantum gravity:

\begin{swamp} \label{Swamp1}
    The vector multiplet moduli space of a 4d $\cN=2$ supergravity theory consistent with quantum gravity must admit an integral variation of polarised Hodge structure.
\end{swamp}

As we will discuss in Section \ref{ssec:completeness}, this principle is a consequence of the Completeness Conjecture \cite{Polchinski:2003bq} and of the No-Global-Symmetries Conjecture \cite{Banks:1988yz,Banks:2010zn}. It is furthermore required to ensure consistency with the Distance and Emergent String Conjecture because it is a necessary condition for the existence of asymptotic towers of states. That an underlying integral structure is related to quantum gravitational principles was already anticipated in~\cite{Cecotti:1989kn}.

\begin{swamp} \label{Swamp2}
    The vector multiplet moduli space $M_n$ of a 4d $\cN=2$ supergravity coupled to $n$ vector multiplets consistent with quantum gravity can be compactified to a variety $\overline{M}_n$ such that $\overline{M}_n\setminus M_n$ has only simple normal crossing singularities. 
\end{swamp}

This principle will be seen, in Section \ref{ssec:SNC}, to be necessary (possibly up to the condition of simplicity) to guarantee a well-defined duality frame in multi-parameter infinite distance limits and hence required for the Distance and Emergent String Conjecture to hold.

Note that both of these principles are necessary (but not sufficient) conditions for the Structural Criterion proposed in~\cite{Cecotti:2020rjq}. The latter proposes---based on a result of~\cite{Green2010}---that consistent projective special K\"ahler manifolds are those vector multiplet moduli spaces with a geometric origin.\footnote{To decide whether a given (integral) VPHS has a geometric origin is a long-standing open problem in mathematics~\cite{GROTHENDIECK1969299,Kreutz}.}

\subsection{Codimension one: Integrality and completeness}\label{ssec:completeness}
Around~\eqref{eq:magnetic-G} we have recalled that the structure group ${\rm Sp}(2n+2,\mathbb{R})$ of the vector bundle $H\to M_n$ is interpreted as the electric-magnetic duality group of the supergravity. The fact that this duality group has real coefficients is a consequence of the projective special K\"ahler geometry of the vector multiplet moduli space $M_n$ as it ensures the existence of a real vector bundle $H\to M_n$. For a given 4d $\cN=2$ supergravity to lie in the Landscape, it is strongly believed that this theory has to satisfy a certain form of the Completeness Hypothesis~\cite{Polchinski:2003bq}: a (sub)lattice of charges is populated by physical states. This includes in particular charged (BPS) particle states which then transform under the electric-magnetic duality group ${\rm Sp}(2n+2,\mathbb{R})$ of the supergravity. Naively, this would allow for particles with real (and in particular irrational) electric/magnetic charges. However, as was argued in~\cite{Banks:2010zn}, having two particles with mutually irrational charges leads to a violation of the covariant entropy bound of~\cite{Fischler:1998st,Bousso:1999xy} and furthermore gives rise to a global symmetry, which is widely believed to be forbidden in quantum gravity~\cite{Banks:1988yz,Banks:2010zn}. Thus, the only consistent way completeness can be realised is via a spectrum of rational charges. A rescaling of this rational charge lattice then leads to an integral lattice of populated charges. In other words, consistency of the supergravity with the Completeness Hypothesis and the absence of global symmetries demands that at each point $t\in M_n$ in moduli space there exists a $\mathbb{Z}$-module $H_{\mathbb{Z},t}$ such that the fiber $H_t$ of the real vector bundle $H\to M_n$ can be written as $H_t=H_{\mathbb{Z},t}\otimes_\mathbb{Z}\mathbb{R}$. This integral structure can then be interpreted as the electric-magnetic charge lattice, fully populated by the completeness hypothesis, which is acted upon by a fixed subgroup ${\rm Sp}(2n+2,\mathbb{Z})\subset{\rm Sp}(2n+2,\mathbb{R})$. 

This integral structure $H_\mathbb{Z}$ underlying the real vector bundle $H\to M_n$ and its complexification $H_\mathbb{C}\to M_n$ is also interesting from a technical perspective as it ensures quasi-unipotent monodromies. Whereas the analysis in Section \ref{sec:Asymptotics} relies on the results of~\cite{Deng2022OnTN,complexVHS,complexVHS-multi} which hold for an arbitrary real (even complex) VPHS, the original works~\cite{schmid,CKS} are formulated in the more restrictive setting of a VPHS with quasi-unipotent monodromy. Of course, for VPHS with a geometric origin, such an integral structure $H_\mathbb{Z}$ is always available and simply given by the middle cohomology with integer coefficients of the underlying space. For a general supergravity this is  in general not true and only ensured via the Completeness Hypothesis.

\subsection{Higher codimension: 
Normal
crossings and dualities}\label{ssec:SNC}
To connect the simple normal crossing condition (arising as a sufficient condition for the results of ~\cite{Deng2022OnTN,complexVHS-multi})  to a physical and quantum gravitational statement, we make use of the idea behind the Distance and Emergent String Conjectures. Namely, near every infinite distance boundary there should exist a dual formulation of the original theory. In more detail, we show

\begin{claim}\label{claim:NC}
    The existence of a well-defined weakly coupled gravitational  duality frame at a multi-parameter infinite distance degeneration in $M_n$ requires that the involved single-parameter divisors are at (not necessarily simple) normal crossing.
\end{claim}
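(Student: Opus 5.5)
Our strategy is to argue by contraposition: assume two (or more) single-parameter divisors $D_1,D_2$ meet at a point $p\in\overline{M}_n$ in a way that is not normal crossing, and show that no single weakly coupled gravitational duality frame can describe the neighbourhood of $p$. By a \emph{weakly coupled gravitational duality frame} we mean the structure identified in Section~\ref{ssec:ESC} near each infinite distance divisor. This consists of axions ${\rm Re}(t^i)$ with shift symmetries dual to EFT string charges $e_i$. It also includes a set of asymptotically lightest charge vectors in ${\rm Gr}_{\ell_{\rm min}}$, whose physical 1-form couplings~\eqref{eq:1-form} define the perturbative (or KK) gauge sector of the dual description. The first step is to record the properties a frame at $p$ must have to restrict to the frames of each $D_j$ as one moves away from $p$ along $D_k$, $k\neq j$. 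These properties are: (i) the axionic shift symmetries of all involved divisors act simultaneously, so the monodromies $T_j=T_{s,j}e^{N_j}$ around $D_j$ commute; (ii) the charge vectors realising $\mathfrak{q}_{\rm min}$ for different $D_j$ are mutually local, in the sense of the Dirac pairing computation around~\eqref{eq:N1N2cond}; and (iii) the EFT string charges $e_j$ span a cone of independent, simultaneously definable strings.

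The second step is to translate these physical requirements into the local geometry of $\overline{M}_n$ near $p$. By (iii), each divisor $D_j$ must be locally cut out by its own coordinate $z^j$ with $t^j=\frac{1}{2\pi\ii}\log z^j$ as in~\eqref{eq:covering-coords}. The local fundamental group of $M_n$ near $p$ must then be generated by loops around the individual $D_j$. By (i), these loops must commute, i.e. the local fundamental group of the complement near $p$ is abelian, generated by the meridians. We then invoke the standard fact (Zariski/Abhyankar-type) that for a divisor germ whose complement has abelian local fundamental group generated by meridians of smooth branches meeting transversally, the germ is normal crossing. Conversely, for a non-normal-crossing germ we exhibit the failure. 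Typical cases are a tangency, as in $z_2^2=z_1^3$-type configurations, or a cusp in the sense of Example~\ref{ex:non-snc}. There the local fundamental group is non-abelian (e.g.\ a braid group), so the lifted monodromies need not commute. A single log-monodromy $N=N_1+N_2$ is then not defined, and the nilpotent orbit~\eqref{eq:nilpotent} as well as the Deligne splitting~\eqref{eq:Deligne} fail to exist. Without commuting $N_j$, the argument leading to $(q_1,q_2)=0$ breaks down, violating (ii). We will only obtain normal crossing, not simple normal crossing, because a self-intersecting branch is locally indistinguishable from two branches. Consistent with the statement, we will not claim simplicity.

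The main obstacle is the first step. It requires making the notion of a ``well-defined duality frame'' precise enough to extract (i)–(iii) from physics alone. It also requires the converse topological statement that abelian, meridian-generated local monodromy forces normal crossing, which is false in general for singular branches. To handle singular branches, we plan to use (iii) to demand that each $D_j$ be locally smooth: a singular branch would support several inequivalent EFT string charges for a single divisor, contradicting the single-axion structure underlying uniqueness of the emergent string. Once smoothness is secured, transversality should follow from requiring that the growth sectors $\cR_{1\dots s}$ and the scaling~\eqref{eq:growth-thm} be expressible through independent ${\rm Im}(t^j)$. We would illustrate the tangent case explicitly via Example~\ref{ex:non-snc}. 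Since the claim is physical, we expect the final argument to remain at this level of rigour rather than a full theorem.
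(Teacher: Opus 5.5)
There is a genuine gap at the step that carries your argument: the passage from (i) to ``these loops must commute, i.e.\ the local fundamental group of the complement near $p$ is abelian''. Requirement (i) constrains only the monodromy matrices $T_j=\rho(\gamma_j)$, that is, the image of the monodromy representation $\rho$. Since $\rho$ need not be faithful, commuting $T_j$ tell you nothing about $\pi_1$.

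The paper's own Example~\ref{ex:non-snc} shows this is not a technicality. There the period vector is the exact nilpotent orbit $e^{t^iN_i}\mathbf{a}_0$ of a cubic prepotential in $t^1=\frac{1}{2\pi\ii}\log z_1$ and $t^2=\frac{1}{2\pi\ii}\log(z_1-e^{-1/z_2})$. A loop along which these two functions wind $m_1$ and $m_2$ times therefore acts by $\exp(m_1N_1+m_2N_2)$ with $[N_1,N_2]=0$. In these variables the VPHS is literally that of a normal crossing degeneration, so (i) holds. No condition phrased purely in terms of monodromy or limiting Hodge data, such as (ii), can detect the tangency. Condition (iii), in the form you use it, also holds: each $D_j$ is cut out by its own function and $e_1,e_2$ are independent. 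Relatedly, where the $N_j$ do fail to commute, the breakdown of the derivation of $(q_1,q_2)=0$ does not show that (ii) is violated.

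The topological lemma is problematic in its own right. As you state it, it is tautological, since smooth branches meeting transversally \emph{is} normal crossing. The correct version (abelian local fundamental group forces at most a node) holds for plane curve germs, but it fails in higher dimension even for smooth branches. For example, $z_1(z_1-z_2^2-z_3^2)=0$ at $0\in\mathbb{C}^3$ has local fundamental group $\mathbb{Z}^2$, generated by the two meridians. Your route to smoothness of the branches via (iii) is also unsupported. A locally irreducible singular germ, such as a cusp, has a single meridian class in $H_1$ and hence a single axion shift, not several inequivalent EFT string charges.

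What is missing is the idea the paper treats as its key observation. At a tangency the local defining functions are tied together (the paper writes $z_1=z_2^k$), so ${\rm Im}(t^1)$ and ${\rm Im}(t^2)$ cannot be varied independently near $D_1\cap D_2$. In Example~\ref{ex:non-snc}, $z_1$ and $z_1-e^{-1/z_2}$ do not form part of a coordinate system at the intersection. The paper argues that this destroys the defining properties of the single-divisor frames:
\begin{itemize}
\item the double perturbative series in ${\rm Im}(t^1)^{-1}$ and ${\rm Im}(t^2)^{-1}$ collapses to a single series, from which the $D_1$ loop order cannot be recovered;
\item mixed light states with $m_{\ell\ell}\sim{\rm Im}(t^1)^{-\ell_1}{\rm Im}(t^2)^{-\ell_2}$ no longer reduce to either single-parameter spectrum;
\item the ordering of ${\rm DF}_1$- versus ${\rm DF}_2$-instantons is blurred.
\end{itemize}
Your closing remark that transversality ``should follow'' from expressing the growth sectors through independent ${\rm Im}(t^j)$ points in this direction. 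However, you leave it as a hope, whereas it is in fact the whole argument. The monodromy-commutation route, which the paper also lists as a third symptom, cannot carry the claim by itself, as Example~\ref{ex:non-snc} shows.
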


Note that the existence of such a gravitational duality frame is not in contradiction with the presence of (possibly mutually non-local) rigid field theory sectors~\cite{Marchesano:2023thx,Marchesano:2024tod,Castellano:2024gwi,Blanco:2025qom,Castellano:2026bnx,Aoufia:2026mqb,Monnee:2025ynn, Hattab:2025aok,Hattab:2026lho,cgm}.

\begin{proof}[Argument]
Let us focus on the case of two single-parameter infinite distance divisors $D_1$ and $D_2$. Along each divisor we assume the existence of a well-defined and weakly coupled dual description of the theory, which we denote by ${\rm DF}_{1,2}$. To be more precise, we choose a local coordinate $z_1$ on $M_n$ such that $D_1$ is given by $D_1=\{z_1=0\}$. Then, by assumption, ${\rm DF}_1$ satisfies the following (related) properties:
\begin{enumerate}
    \item A suitable inverse power of the coordinate ${\rm Im}(t^1)$ (where $z_1=\exp(2\pi\ii t^1)$) has the dual interpretation of a small coupling parameter organising the perturbation series of the theory in the dual frame ${\rm DF}_1$. This means that each physical quantity has a clean expansion into a classical, loop and non-perturbative terms in this coupling.
    \item Along $D_1$, the spectrum of light dynamical states is well-defined. This means that each state falls uniquely into one of the three categories: light and dynamical, heavy and integrated out, non-perturbative and classical.\footnote{This point is also emphasised in recent literature on the M-theoretic Emergence Proposal, see e.g.~\cite{Blumenhagen:2024lmo,Artime:2026kfq}.} 
    \item From the single-parameter nilpotent orbit expansion~\eqref{eq:nilpotent-single} it follows that the covering coordinate $2\pi\ii t^1=\log(z_1)$ has axionic real part. The axionic shift induces a monodromy action $T_1$ upon encircling the divisor $D_1$ as $z_1\to e^{2\pi\ii}z_1$. 
\end{enumerate}
Of course, the same properties hold for the analogously defined dual frame ${\rm DF}_2$ at generic points of $D_2$ with local coordinate $z_2$.

The key observation is that in case the second divisor $D_2$ meets $D_1$ in a point of tangency of order $k\geq1$, this tangency introduces a relation between the local coordinates $z_1$ and $z_2$ near the intersection making it impossible to consistently recover the single-parameter dual frames ${\rm DF}_i$ starting from the intersection $D_1\cap D_2$.\footnote{In full generality, it is not clear whether the intersection of two infinite distance divisor components sits itself at infinite distance in moduli space~\cite{Grimm:2018ohb}. For Calabi--Yau compactifications of Type IIB string theory, this statement is proven in~\cite{geronimo}.} The only time such a relation is absent is in case $D_1$ and $D_2$ meet transversally, i.e. if $D_1$ and $D_2$ are at normal crossing. As a result, in absence of normal crossing we can find a physical inconsistency to all three properties at the intersection $D_1\cap D_2$: 
\begin{enumerate}
    \item Along both divisors, physical quantities can be organised into a classical piece, perturbative corrections and non-perturbative contributions. At the intersection $D_1\cap D_2$, there should therefore exist a duality frame in which there are two independent perturbative series, one in ${\rm Im}(t^1)^{-1}$ and one in ${\rm Im}(t^2)^{-1}$,
    \begin{equation}
        \cA({\rm Im}(t^1),{\rm Im}(t^2))=\sum_{n,m\geq0}\cA_{n,m}{\rm Im}(t^1)^{-n}{\rm Im}(t^2)^{-m}+\,\text{non-pert}\,,
    \end{equation}
    for every physical quantity $\cA$. Only in this way is it possible to ensure that one recovers the perturbation series valid away from the intersection along generic points of $D_1$ and $D_2$. We now assume that $D_1$ and $D_2$ meet in a point of tangency of order $k\geq1$, i.e. we can write $z_1=z_2^k$ near the intersection $D_1\cap D_2$.\footnote{Notice that $z_1=z_2$ does not define a normal crossing intersection, which would be of the form $z_1z_2=0$.} It follows that the double expansion of $\cA$ collapses into
    \begin{equation}
        \cA({\rm Im}(t^1),{\rm Im}(t^2))=\sum_{n,m}\tilde{\cA}_{n,m}{\rm Im}(t^2)^{-n-m}+\,\text{non-pert}=\sum_{p}\cA_p{\rm Im}(t^2)^{-p}+\,\text{non-pert}\,,
    \end{equation}
    and it is not possible to uniquely identify the $\tilde{\cA}_{n,m}$ from the $\cA_p$. In other words, the $D_1$ loop-order cannot be recovered from the perturbative expansion at tangential meeting locus $D_1\cap D_2$.
    \item At generic points of both divisors, the spectrum of states in the respective duality frame ${\rm DF}_i$ is organised as
    \begin{equation}\label{eq:singlespectrum}
        \cS_i:\,\,\underbrace{\left[m_\ell\sim \frac{1}{{\rm Im}(t^i)^{\ell_i}}\,,\,\,\ell_i>0\right]}_{\text{light}}\oplus\underbrace{\left[m_h\sim \frac{1}{{\rm Im}(t^i)^{h_i}}\,,\,\,h_i\leq0\right]}_{\text{heavy}}\oplus\underbrace{\left[e^{-S_i}\sim e^{\ii t^i}\sim|z_i|\right]}_{\text{non-perturbative}}\,.
    \end{equation}
    We stress that the spectrum $\cS_i$ focuses on the gravitationally coupled states and so does not include potential rigid field theory sectors whose masses are not polynomial in ${\rm Im}(t^i)$. A consistent spectrum $\cS_{12}$ at the intersection $D_1\cap D_2$ must again allow to recover the single-parameter spectra $\cS_i$ when moving (along one divisor) away from the intersection. In general, $\cS_{12}$ will take the form $\cS_1\otimes\cS_2$ and all possible combinations of states in~\eqref{eq:singlespectrum} will appear. The precise splitting into light and heavy states depends on the growth sector, see Point 3. below. Among the light states, there will be states with mass scaling as 
    \begin{equation}
        \cS_{12,\,{\rm light}}\supset \left[m_{\ell\ell}\sim 
        \frac{1}{{\rm Im}(t^1)^{\ell_1}{\rm Im}(t^2)^{\ell_2}}\,,\,\,\ell_i>0\right]\,.
    \end{equation}
    For a point of tangency, we again write $z_1=z_2^k$, $k\geq1$, and find that
    \begin{equation}
        m_{\ell\ell}\sim \frac{1}{{\rm Im}(t^2)^{\ell_1+\ell_2}}\prec \frac{1}{{\rm Im}(t^2)^{\ell_2}}\,,\quad m_{\ell\ell}\sim \frac{1}{{\rm Im}(t^1)^{\ell_1+\ell_2}}\prec \frac{1}{{\rm Im}(t^1)^{\ell_1}}\,,
    \end{equation}
    showing that we do not recover the scaling of the single-parameter duality frames when moving away from the intersection.\footnote{Of course, this scaling argument can also be applied to the full Distance Conjecture tower. We stress that we do not need the tower for this argument as the inconsistency already appears for a finite number of states.} On top of this inconsistency, a similar argument can be made for the non-perturbative sector which consists of all sectors in $\cS_1\otimes\cS_2$ including at least one of the single-parameter non-perturbative sectors. The tangency between $D_1$ and $D_2$ then shows that there are (at least) $k-1$ ${\rm DF}_2$-instantons below the lightest ${\rm DF}_1$-instanton, thereby blurring the distinction between perturbative and non-perturbative states in $\cS_1$,
    \begin{equation}
        m_{{\rm inst},2}\sim |e^{\ii t^2}|=|e^{\frac{\ii}{k}t^1}|\precsim |e^{\ii t^1}|\sim m_{{\rm inst},1}\,.
    \end{equation}
    Notice that even for $k=1$ the spectrum does not coincide with $\cS_1$ as there are now additional states at the scale $m_{{\rm inst},1}$.
    \item The axionic shift symmetries of $t^{1,2}$ near $D_{1,2}$ induce monodromy actions $T_{1,2}$ on $\Pi$. In the single-parameter limits, the unipotent parts of the $T_i$ define the log-monodromy matrices $N_i$ via $T_{u,i}=\exp N_i$.\footnote{Recall the lack of quasi-unipotency of the $T_i$ as discussed in Section~\ref{ssec:single}. The results of~\cite{complexVHS,complexVHS-multi} show that the respective limiting mixed Hodge structures are nevertheless characterised by the $N_i$.} In the case of tangency, the log-monodromy matrices do not commute, $[N_1,N_2]\neq0$. This means that there are two different nilpotent orbit expansions of the periods near $D_1\cap D_2$,
    \begin{equation}
        \Pi_{12}=e^{t^1N_1}e^{t^2N_2}\mathbf{a}_0\,,\quad \Pi_{21}=e^{t^2N_2}e^{t^1N_1}\mathbf{a}'_0\,.
    \end{equation}
    Correspondingly, there is no well-defined growth theorem telling us about the light spectrum at the intersection $D_1\cap D_2$ and, as before, we conclude that in the case of a tangency between $D_1$ and $D_2$, there is no well-defined loop-grading and no well-defined organisation $\cS_{12}$ of the spectrum of states.
\end{enumerate}
\end{proof}

The order of tangency $k$ between $D_1$ and $D_2$ can in principle be infinite. An instance of this is presented below in Example~\ref{ex:non-snc}. For finite contact order $k\in\mathbb{N}$, the tangency between $D_1$ and $D_2$ can be removed into a transverse intersection via a sequence of $k$ blow-ups. The endpoint of this sequence then gives rise to the physically relevant and well-behaved compactification $\overline{M}_n$ of $M_n$. In particular, there exists a normal crossing compactification of moduli space and the theory does not necessarily reside in the Swampland. For infinite order of tangency, such a resolution process does not work. In this case, the theory is indeed inconsistent with the Distance and Emergent String Conjectures and as such believed to lie in the Swampland.

That the normal crossing condition on $M_n$ is in fact a relevant Swampland condition is further substantiated by Example 32 of~\cite{haydys2020} which presents a 4d $\cN=2$ field theory vector multiplet moduli space for which this condition is not satisfied.

Notice that we phrased the previous discussion in terms of $D_1$ and $D_2$ meeting transversally, i.e. $D_1$ and $D_2$ are at normal crossing. The condition of technical relevance for~\cite{Deng2022OnTN,CKS,complexVHS-multi}, however, is that of \emph{simple} normal crossing. Even if $D_1$ and $D_2$ intersect normally, it could still be that some $D_i$ become(s) non-reduced at the intersection. This means that the local model near $D_1\cap D_2$ would not be $(\Delta^\ast)^2\times \Delta^{n-2}$, but rather include a multi-cover of the first two factors. This is reflected in the monodromy of the vector multiplet scalars $t^{1,2}$ around the singularity $D_1\cap D_2$, i.e. the periodicity of the axions ${\rm Re}(t^{1,2})$, see the discussion after~\eqref{eq:2-form},
\begin{equation}\label{eq:monodromy}
    t^i\to t^i+e^i\quad{\rm as}\quad z_i\to e^{2\pi\ii}z_i\,.
\end{equation}
Whether it is possible to change coordinates so that this monodromy becomes minimal, i.e. $e^i=1$, is again a completeness question. Indeed, an equivalent way of thinking about the monodromy vector $e=(e^i)$ was discussed in Section~\ref{ssec:ESC} where we recalled that $e$ is the magnetic charge vector of a solitonic string solution coupled to the supergravity. For the infinite distance limits we are interested in, these solutions have been coined EFT strings in~\cite{Lanza:2021udy}. Hence, whether a rescaling to $e^i=1$ is possible depends on the EFT string spectrum and therefore on the version of the completeness hypothesis one is willing to assume. We conjecture that an EFT string of minimal charge exists and hence that the normal crossing singularity can be made simple normal crossing.

The resulting Quantum Gravity Principle \ref{Swamp2} is in fact identical to the requirement of algebraic compactifiability of 4d $\cN=2$ vector multiplet moduli spaces. The latter was proposed in ~\cite{Delgado:2024skw} as a sufficient condition to ensure the geometric compactifiability criterion argued for as a manifestation of finiteness in quantum gravity. Via our Claim \ref{claim:NC}, we have argued why this algebraic compactifiability criterion is also a necessary condition for consistency with 4d $\cN=2$ quantum gravity.

\begin{example}\label{ex:non-snc}
     {\rm (Infinite order tangency)}\\
     The normal crossing property discussed above does not follow automatically from the projective special K\"ahler geometry of the vector multiplet moduli space. To see this explicitly, consider the two-dimensional moduli space
     \begin{equation}
         M_2=\mathbb{C}^2\setminus\left(\{0\}\cup D_1\cup D_2\right)\,,
     \end{equation}
     where $D_1=\{z_1=0\}$, $D_2=\{z_1=e^{-1/z_2}\}$ in terms of complex coordinates on $\mathbb{C}^2$. To describe the projective special K\"ahler structure on $M_2$, we define the covering coordinates
     \begin{equation}
         t^1=\frac{1}{2\pi\ii}\log(z_1)\,,\quad t^2=\frac{1}{2\pi\ii}\log(z_1-e^{-1/z_2})\,
     \end{equation}
     so that the boundary divisor components $D_i$ lie at infinite distance in the limits $t^i\to\ii\infty$. As prepotential we choose
    \begin{equation}
        \mathbb{z}(Z)=\frac{Z^1(Z^2)^2}{Z^0}\,,
    \end{equation}
    where $Z^0=1$ yields $t^i=Z^i/Z^0$. By direct computation one finds
    that $D_1$ is a boundary of type II and $D_2$ a boundary of type III, i.e. both monodromies are unipotent and the log-monodromy matrices satisfy $N_1^2=0$, $N_2^3=0$ (but $N_1,N_2^2\neq0$). Notice that the two components $D_1$ and $D_2$ meet to infinite tangency order due to the exponential function appearing in the definition of $D_2$. Correspondingly, $D_1\cap D_2$ is an essential singularity for which no sequence of blow-ups can lead to (simple) normal crossing. 
\end{example}
At first sight, it might be surprising to see a cubic prepotential appearing in the example violating the normal crossing property, which is automatic in the case of Calabi--Yau compactifications. Note, however, that a possible compactification $\overline{M}_n$ of moduli space is not part of the supergravity data defining the theory. Indeed, the projective special K\"ahler geometry is a statement only about the moduli space $M_n$ itself and not about possible compactifications. As a result, from a pure supergravity perspective we are free to ``choose'' a compactification (if it exists) and with that a boundary divisor $D=\overline{M}_n\setminus M_n$. In contrast, this boundary data is automatically given in the case of Calabi--Yau compactifications and normal crossing is guaranteed~\cite{viehweg}.\footnote{In the geometric case, $M_n$ is quasi-projective~\cite{viehweg}. Starting with a simple normal crossing (SNC) compactification $(\overline{M}_n,D)$ one might nevertheless continue blowing-up along strata of the boundary divisor $D$ so as to produce an infinity of other SNC compactifications of $M_n$. A priori this introduces new boundary components and hence a dependence of the asymptotic physics on the chosen boundary after all. However, this is not so: the new exceptional boundary component $E$ has log-monodromy matrix $N_E\in\sum_i\mathbb{R}_+N_i$ where the sum runs over all components $D_i$ of $D$ which contain the blow-up center. It is known that the limiting mixed Hodge structure associated to $E$ is the same for any $N_E$ in this cone~\cite{complexVHS-multi,CKS} and so there is no new physics associated with $E$. Under additional smoothness assumptions of the compactifications it is true that any birational transformation of an SNC compactification leaves the physics invariant~\cite{wlodarczyk1999toroidal}.} This additional freedom in the supergravity setting is what allows us to find boundaries which cannot be resolved to normal crossing form. 

Incidentally, the function $e^{-1/z_2}$ describing the boundary behaviour of non-simple normal crossing type is not a tame function. Hence this example also seems to be in tension with the tameness conjecture of quantum gravity \cite{Grimm:2021vpn} (see \cite{Grimm:2025lip} for the connection between tameness and the geometric compactifiability of~\cite{Delgado:2024skw}). However, the relation to tameness appears to be indirect and in fact a consequence of holomorphy and therefore supersymmetry, in the following sense: The reason why $e^{-1/z_2}$ is not tame is because of the periodicity of the complex exponential function rather than because of the infinite order of tangency, which lies at the heart of the violation of the simple normal crossing condition. The infinite vanishing order would just as well be implemented by a real function $e^{-1/x}$, $x \in \mathbb R$, which is tame, but supersymmetry dictates holomorphy. Therefore it appears that tameness as such is not directly the driving force behind Quantum Gravity Principle \ref{Swamp2}. It would be interesting to understand this better.

\subsection{(Locally) symmetric projective special K\"ahler manifolds}
It is informative to apply our findings to moduli spaces of (locally) symmetric type. Locally symmetric spaces are manifolds of the form
\begin{equation}\label{eq:loc-sym}
    M_n=\Gamma\setminus G/K\,,
\end{equation}
where $G$ is a (reductive) Lie group with $K$ its maximal compact subgroup and $\Gamma$ is a discrete subgroup of isometries of $G$. Only a special subclass of locally symmetric spaces are also projective special K\"ahler manifolds. This subclass is constructed from non-compact globally symmetric spaces $G/K$ of projective special K\"ahler type with a suitable freely acting discrete subgroup $\Gamma\subset G$, as follows: Recall first that there is only a finite list of non-compact globally symmetric projective special K\"ahler manifolds $G/K$~\cite{Cremmer:1984hc}. A lift of the $G$-action on $G/K$ to the (trivial) symplectic bundle $H\to G/K$ (see Section~\ref{sec:special-geometry}) is specified by a homomorphism\footnote{Here we have fixed a K\"ahler frame to trivialise the line bundle $L\to G/K$ as discussed in Section~\ref{sec:special-geometry}.}
\begin{equation}
    \rho:G\to{\rm Sp}(2n+2,\mathbb{R}),\,g\mapsto M(g)\,
\end{equation}
satisfying the usual cocycle conditions. For a freely acting discrete subgroup $\Gamma\subset G$, we divide both $G/K$ and the bundle $H\to G/K$ by the action of $\rho\vert_\Gamma$ to find a projective special K\"ahler structure on the locally symmetric quotient $M_n=\Gamma\setminus G/K$. 
The electric-magnetic duality group in this case is given by $\rho(\Gamma)\subset{\rm Sp}(2n+2,\mathbb{R})$. 

For such locally symmetric projective special K\"ahler manifolds $M_n$, we first note that Quantum Gravity Principle~\ref{Swamp1} holds if and only if the representation $\rho$ maps $\Gamma$ into a subgroup of ${\rm Sp}(2n+2,\mathbb{Z})$. Furthermore, Quantum Gravity Principle~\ref{Swamp2} is satisfied if $\Gamma$ is an arithmetic subgroup of $G$.\footnote{By Margulis' arithmeticity theorem, any irreducible lattice $\Gamma$ for semisimple $G$ with ${\rm rk}(G)\geq2$ is arithmetic.} Indeed, in this situation the compactification theorem of Bailey--Borel states that $M_n=\Gamma\setminus G/K$ is quasi-projective and as such a projective simple normal crossing compactification $\overline{M}_n$ exists. Consequently, all multi-parameter infinite distance limits in $\overline{M}_n$ are of simple normal crossing type and the full analysis of Section~\ref{sec:Asymptotics} applies. In particular, this means that in all single-parameter as well as EFT string-like multi-parameter limits~\eqref{eq:homo-scaling} the physical 1- and 2-form couplings follow the rates~\eqref{eq:distance-scaling}. Notice also that the quasi-projectivity of $M_n$ (equivalent to the algebraic compactifiability condition of~\cite{Delgado:2024skw}) implies the geometric compactifiability of~\cite{Delgado:2024skw}, i.e. the volume of geodesic balls in $M_n$ grows no faster than those in the same dimensional affine space.

Locally symmetric moduli spaces $\Gamma\setminus G/K$ with $\Gamma$ arithmetic have been studied in much greater generality, in particular without any supersymmetry (or projective special K\"ahler) assumptions, in the recent works~\cite{Baines:2025upi,Baines:2026aug}. Interestingly, these references show that the Emergent String Conjecture decay rates~\eqref{eq:distance-scaling} in geodesic infinite distance limits follow also from the group-theoretic structure of locally symmetric spaces.\footnote{Geodesics in these spaces are of the form $\gamma(t)=g\exp(tX)\cdot o$ in the notation of~\cite{Baines:2025upi,Baines:2026aug}. As such, geodesic infinite distance limits are of the EFT string limit form~\eqref{eq:homo-scaling}.} In particular, in case the locally symmetric moduli space is also a projective special K\"ahler manifold, this latter structure is not needed to infer the asymptotic decay rates~\eqref{eq:distance-scaling}. This further corroborates the point made in previous sections. Namely, the existence of the tower of states as well as its spacing and degeneracies constitutes the actual quantum gravitational content of the Emergent String Conjecture. 

One might also wonder about globally symmetric projective special K\"ahler manifolds $G/K$ of non-compact type, which have been classified in~\cite{Cremmer:1984hc}. These spaces are contractible, implying that the symplectic bundle $H\to G/K$ is trivial, $H\simeq G/K\times\mathbb{R}^{2n+2}$. Furthermore, as $\pi_1(G/K)=0$ there is no non-trivial monodromy on globally symmetric projective special K\"ahler manifolds. Seeing that $H$ is globally trivial, one can trivially find an integral structure $H_\mathbb{Z}$ underlying $H$, $H_{\mathbb{Z},t}=\{t\}\times\mathbb{Z}^{2n+2}$ for every $t\in G/K$. However, this structure is associated with an additional choice of generator of $\mathbb{Z}^{2n+2}\subset\mathbb{R}^{2n+2}$ and so constitutes extra data that is not given from the supergravity alone. Without the choice of such a generator, globally symmetric projective special K\"ahler manifolds therefore do not satisfy Quantum Gravity Principle~\ref{Swamp1}. While we cannot show that globally symmetric moduli spaces explicitly violate also Quantum Gravity Principle~\ref{Swamp2}, we can prove that they are never quasi-projective. The latter is (in practice) the most common sufficient condition for the existence of a simple normal crossing compactification. To see this, recall that if $G/K$ were quasi-projective, it would be biholomorphic to some quasi-projective variety $X$ for which there exists a projective compactification $\bar{X}$. It follows that every bounded holomorphic function on $X$ must in fact be constant.\footnote{This is an application of the removable singularity theorem as the local model of the simple normal crossing boundary divisor $\bar{X}\setminus X$ is a product of punctured disks.} However, by the Harish--Chandra embedding theorem $G/K$ is biholomorphic to a bounded domain in some complex affine space $\mathbb{C}^N$ for which all coordinate functions are bounded and holomorphic, but not constant. Hence, $G/K$ is not quasi-projective and in the above sense it is unlikely that $G/K$ satisfies Quantum Gravity Principle~\ref{Swamp2}.

\section{Positivity of cubic prepotential couplings}\label{sec:positive}
The results of the previous sections raise the question which further properties of supergravity theories that derive from Calabi--Yau compactifications of string theory are in fact a consequence of 4d ${\cal N}=2$ supersymmetry alone. 
As we now discuss, one such property is the existence of a special basis in which the (classical) Chern--Simons couplings are non-negative. 

To appreciate this point, recall first from Section~\ref{ssec:local} that the projective special K\"ahler geometry of $M_n$ dictates the prepotential $\mathbb{z}$ to be homogeneous of degree two. The classification of infinite distance limits in $M_n$ in terms of the associated log-monodromy matrix $N$ puts further constraints on the form of $\mathbb{z}$ near infinite distance boundaries. Combining the expressions~\eqref{eq:sg-Kpot} and~\eqref{eq:Hodge-Kpot} for the moduli space K\"ahler potential, it follows from $N^{d+1}=0$ for all $d\geq3$ that any rational piece of the prepotential $\mathbb{z}$ can at most have a cubic numerator in the special coordinates $Z^I$. We will focus on this cubic rational part in this section, i.e. we consider prepotentials of the form
\begin{equation}\label{eq:F-cubic}
    \mathbb{z}=-\frac{C_{ijk}}{6}\frac{Z^iZ^jZ^k} {Z^0}   \,,
\end{equation}
which arise most prominently near infinite distance boundaries of type IV$_{n}$ (i.e. points of maximal unipotent monodromy).

The fully symmetric couplings $C_{ijk}$ enter the effective action~\eqref{eq:4d-action} in the last term as
\begin{equation}
    S_{\rm 4d}\supset\int_{\mathbb{R}^{1,3}}\frac{C_{ijk}}{2}{\rm Re}(t^i)F^i\wedge F^j\,,
\end{equation}
which shows that they are quantised as $C_{ijk}\in\mathbb{Z}$. We will therefore refer to these couplings as (cubic) Chern--Simons couplings. The goal of this section is to establish the following\footnote{This statement can be extended to any type IV$_d$, $d<n$, boundary of $M_n$ which by itself is connected to some type IV$_n$ point. It would be interesting to investigate whether the absence of a type IV$_n$ point also implies the absence of limits of type IV$_d$ with $d<n$.}
\begin{claim}
    Consider a 4d ${\cal N}=2$ supergravity and a neighbourhood of a type IV$_n$ degeneration of the vector multiplet moduli space $M_n$, where the prepotential admits an expansion of the form \eqref{eq:F-cubic}. Then there exists a basis of vector multiplets $\{(\tilde{A}^i,\tilde{t}^i)\}$ in which
 $\tilde{C}_{ijk} \geq 0$.
\end{claim}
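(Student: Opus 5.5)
We will prove the claim by passing to a simplicial sub-cone of the region where the cubic prepotential gives a positive definite moduli space metric, and then changing basis to the generators of that cone.

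First we identify where the metric is positive. Near the type IV$_n$ point the local model is $(\Delta^\ast)^n$ with simple normal crossing boundary. The nilpotent orbit expansion~\eqref{eq:nilpotent} with $s=n$ and the prepotential~\eqref{eq:F-cubic} give, up to exponentially suppressed terms, the Kähler potential
\begin{equation*}
    e^{-K}=\tfrac{4}{3}\,C_{ijk}\,s^is^js^k\,,\qquad s^i={\rm Im}(t^i)\,.
\end{equation*}
Here the factor $\tfrac43$ and the overall sign are fixed by~\eqref{eq:sg-Kpot}. Write $\mathcal{K}(s)=C_{ijk}s^is^js^k$.

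The VPHS of Section~\ref{ssec:VPHS} is polarised, so $e^{-K}>0$ and $g_{i\bar\jmath}$ is positive definite. Both hold on the whole region covered by the expansion. By the growth theorem~\eqref{eq:growth-thm}, valid in every growth sector $\mathcal{R}_{1\dots n}$ around the simple normal crossing point, this region contains the positive orthant scaled to infinity. Since $\mathcal{K}$ is homogeneous, we get $\mathcal{K}(s)>0$ for every $s$ in the open orthant $\mathcal{C}_0=\{s^i>0\}$. Type IV$_n$ also gives $N_i^3\neq0$ for the generic element of the cone $\sum_i\mathbb{R}_{>0}N_i$. Since $C_{ijk}$ is read off from $N_iN_jN_k$ acting on the ${\rm Gr}_0$ generator, $\mathcal{K}$ is not degenerate on $\mathcal{C}_0$.

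Next we enlarge to the Kähler-type cone
\begin{equation*}
    \mathcal{C}=\{s\,:\,\mathcal{K}(s)>0,\ g_{ij}(s)\ \text{positive definite}\}\,.
\end{equation*}
The real metric is $g_{ij}\propto -\partial_i\partial_j\log\mathcal{K}$. Positive definiteness together with $\mathcal{K}>0$ is equivalent to the Hessian of $\mathcal{K}$ having signature $(1,n-1)$, i.e. a Hodge-index-type condition. Take $\mathcal{C}$ to be the connected component containing $\mathcal{C}_0$. By the Hodge-index form of the Hessian, $\mathcal{K}^{1/3}$ is concave on it, so $\mathcal{C}$ is convex. Then $\mathcal{K}$, and hence each $C_{ijk}=\partial_i\partial_j\partial_k\mathcal{K}/6$ contracted with vectors in $\overline{\mathcal{C}}$, obeys a positivity statement. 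The key inequality to aim for is
\begin{equation*}
    C_{ijk}u^iv^jw^k\ge0\qquad\text{for all } u,v,w\in\overline{\mathcal{C}}\,.
\end{equation*}
This is the mixed-volume (Alexandrov--Fenchel type) inequality for hyperbolic cubic forms. It should follow from Gårding's theory: $\mathcal{K}$ is hyperbolic with respect to any $v\in\mathcal{C}$, and $\mathcal{C}$ lies in the hyperbolicity cone. The polarisation of a hyperbolic polynomial is nonnegative on the hyperbolicity cone. Showing hyperbolicity is the main obstacle. The plan is to prove real-rootedness of $\tau\mapsto\mathcal{K}(s+\tau v)$ from the signature $(1,n-1)$ of the Hessian along the segment, which the polarisation forces via the VPHS positivity of $g_{i\bar\jmath}$. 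The first thing to try is the two-dimensional restriction, where the condition reduces to a discriminant inequality for a binary cubic.

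Finally we pick the new basis. Choose $n$ linearly independent vectors $\tilde e_1,\dots,\tilde e_n\in\overline{\mathcal{C}}$ whose simplicial cone contains, or equals, $\mathcal{C}_0$; the original generators $e_i$ of the orthant themselves lie in $\overline{\mathcal{C}}$. Integrality of the quantised couplings lets us take the $\tilde e_a$ in the integral lattice of EFT string charges $e$, so the transformation $s^i=\tilde e_a^{\,i}\tilde s^a$ lies in ${\rm GL}(n,\mathbb{R})$, and in ${\rm GL}(n,\mathbb{Z})$ when the lattice allows. This induces a symplectic transformation on $(A^I,t^I)$ that preserves the cubic form~\eqref{eq:F-cubic}, with
\begin{equation*}
    \tilde C_{abc}=C_{ijk}\tilde e_a^{\,i}\tilde e_b^{\,j}\tilde e_c^{\,k}\ge0
\end{equation*}
by the inequality above. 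In the simplest version, $\tilde e_a=e_a$, the claim reduces to $C_{ijk}\ge0$ in the monodromy basis. This already follows if the full orthant closure lies in $\overline{\mathcal{C}}$, which is what the growth theorem gives.
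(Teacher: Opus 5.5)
\textbf{The gap.} The whole claim rests on your key inequality $C_{ijk}u^iv^jw^k\ge0$, and you leave it unproven. The route you propose for it, G\aa rding hyperbolicity, cannot work. A cubic with $\mathcal{K}>0$ and Hessian of signature $(1,n-1)$ on an open cone need not be hyperbolic with respect to any direction in that cone. Take $\mathcal{K}(a,b)=a^3+3a^2b+3ab^2$. On the whole open quadrant it satisfies $\mathcal{K}>0$ and $\det\partial^2\mathcal{K}=-36\,b(a+b)<0$. It is therefore an admissible cubic prepotential with positive-definite metric near a type IV$_2$ point. Yet $\mathcal{K}=a(a^2+3ab+3b^2)$ has no real linear factorisation. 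For instance, $\mathcal{K}(1+\tau,\tau)=(1+2\tau)^3-\tau^3$ has two non-real roots, although $(1,1)$ lies inside the cone. Your binary-discriminant test fails exactly here, even though $C_{ijk}\ge0$ does hold in this example.

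There is a second, independent problem. Convexity of the enlarged cone $\mathcal{C}$ is not justified. Concavity of $\mathcal{K}^{1/3}$ only makes sense on a convex domain, and local concavity on a connected component does not make that component convex.

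The paper avoids both issues via the $r$-map. The zero-axion slice is a 5d projective special real manifold. From the 5d literature the paper imports two established facts: the fixed signature of $(\mathcal{F}_{ij})$, and non-negativity of $T_p(X)=\tfrac16C_{ijk}X^iX^jp^k$ for $p\in\overline{\mathcal{M}}_{n-1}$. Together these give the K\"ahler basis.

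\textbf{How to close it.} What you are missing is the 4d analogue of $T_p\ge0$. You can prove it directly on the open orthant $\mathcal{C}_0$, which is convex by construction, so no enlargement is needed.
\begin{enumerate}
\item For $u,v\in\mathcal{C}_0$, positivity of the metric makes $\phi(\tau)=\log\mathcal{K}(u+\tau v)$ concave on $\tau\ge0$. Hence $\phi'(0)\ge\phi'(\tau)\simeq 3/\tau>0$ for large $\tau$, which gives $C_{ijk}u^iu^jv^k>0$.
\item Fix $v\in\mathcal{C}_0$ and set $Q_v(x)=C_{ijk}v^ix^jx^k$. This form is proportional to the Hessian of $\mathcal{K}$ at $v$, so it has exactly one positive eigenvalue. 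By step 1 it is positive on the connected set $\mathcal{C}_0\ni v$.
\item By the reverse Cauchy--Schwarz inequality $Q_v(u,w)^2\ge Q_v(u)Q_v(w)$, the map $w\mapsto Q_v(u,w)$ never vanishes on $\mathcal{C}_0$. It equals $Q_v(u)>0$ at $w=u$, so it is positive throughout $\mathcal{C}_0$.
\item Hence $C_{ijk}u^iv^jw^k>0$ on $\mathcal{C}_0$. By continuity, $C_{ijk}\ge0$ already holds in the coordinates adapted to the type IV$_n$ point.
\end{enumerate}
With this step supplied, your outline becomes a self-contained 4d argument that needs neither the $r$-map nor hyperbolicity. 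As written, the central inequality is unproven and the suggested method for it is false in general.
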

In supergravity theories obtained from Calabi-Yau compactifications, this basis is simply the geometric K\"ahler basis, but in general supergravity theories the existence of such a basis is a non-trivial statement which requires further justification.

\begin{proof}[Argument]
The fact that prepotentials of the form~\eqref{eq:F-cubic} arise near boundaries of type IV$_n$ is another indication that these limits correspond to decompactification limits to 5d, as mentioned already in Section~\ref{ssec:ESC}. The reason is that such prepotentials are known to lie in the image of the supergravity $r$-map~\cite{deWit:1991nm,Alekseevsky:2008cta,Cortes:2011aj}: $M_n$ can be written as the tangent bundle of a projective special real manifold $\cM_{n-1}\subset\mathbb{R}^{n}$~\cite{Gunaydin:1983bi}, i.e. $M_n=T(\mathbb{R}\cdot\cM_{n-1})\simeq\mathbb{R}^n\oplus\ii(\mathbb{R}\cdot\cM_{n-1})$. In physical terms this means that the real $n$-dimensional slice in $M_n$ where all axions vanish gives rise to 5d $\cN=1$ supergravity theory coupled to $n-1$ vector multiplets. The cubic prepotential of this 5d ``parent theory'' takes the simple form  
\begin{equation}\label{eq:F-5d}
    \cF(X)=\frac{1}{6}C_{ijk}X^iX^jX^k\,,
\end{equation}
and its vector multiplet moduli space is given by the hypersurface $\cM_{n-1}=\{\cF(X)=1\}\subset\mathbb{R}^{n}$. Circle reduction of this 5d theory, i.e. applying the $r$-map, gives back the original 4d supergravity specified by~\eqref{eq:F-cubic} upon identifying $X^i=e^{K/3}{\rm Im}(t^i)$ in the gauge $Z^0=1$, $Z^i=t^i$.\footnote{The reason for the factor $e^{K/3}$ in the identification of the vector multiplet scalars is the hypersurface constraint $\cF(X)=1$ of the 5d theory.} Such 5d supergravities are the topic of~\cite{Kaufmann:2024gqo,david-vicente}, where it is shown that there always exists a basis of 5d gauge fields (including the 5d graviphoton) $\tilde{A}_{\rm 5d}^i$, the so-called K\"ahler basis, in which $\tilde{C}_{ijk}\geq0$. The proof is based on two main ingredients~\cite{Kaufmann:2024gqo,david-vicente}. First, the signature of the Hessian matrix $(\cF_{ij})$ is fixed throughout $\overline{\cM}_{n-1}$ to be $(1,r)$ with $r\leq n-2$. Second, for $p\in\overline{\cM}_{n-1}$, the functional
\begin{equation} \label{5dtension}
    T_p:\cM_{n-1}\to\mathbb{R},\,X\mapsto T_p(X)=\frac{1}{6}C_{ijk}X^iX^jp^k
\end{equation}
can be shown to be non-negative~\cite{david-vicente}. These two properties are sufficient to prove that $\tilde{C}_{ijk} \geq 0$ in a suitable basis \cite{Kaufmann:2024gqo}.\footnote{Under an appropriate completeness assumption, $T_p$ can be interpreted as the tension of a BPS string with magnetic charge vector $p\in\cC_{\rm BPS}$~\cite{Katz:2020ewz,Kim:2024tdh}, in which case positivity of $T_p(X)$ follows from positivity of the BPS tension, and the signature of $(\cF_{ij})$ is also required by consistency of the anomalies on the worldsheet~\cite{Katz:2020ewz,Kim:2024tdh}. Importantly, however, just like the signature of $(\cF_{ij})$, positivity of $T_p(X)$ can be shown to be a general consequence of the projective special real property of $\cM_{n-1}$ alone \cite{david-vicente}, and hence of supersymmetry, without any additional assumption on the existence of BPS strings with tension $T_p(X)$.} 

The $r$-map/ circle reduction then makes a similar positivity result in 4d supergravities with prepotential~\eqref{eq:F-cubic} obvious. The change of basis to the K\"ahler basis in five dimensions induces a change of basis $\{(A^i,t^i)\}\to\{(\tilde{A}^i,\tilde{t}^i)\}$ in the 4d supergravity as depicted by the dashed horizontal line in the diagram
\begin{equation}
    \begin{tikzcd}
        \mathrm{SG}_5 \arrow[rr, "{\text{K\"ahler\text{-}basis}}"] &  &
        \widetilde{\mathrm{SG}}_5 \arrow[d, "r"] \\
        \mathrm{SG}_4 \arrow[u, "r^{-1}"] \arrow[rr, dashed]         &  & 
        \widetilde{\mathrm{SG}}_4            
        \end{tikzcd}\,.
\end{equation}
Denoting the 5d change of basis by ${\cal A}\in{\rm GL}_n(\mathbb{R})$, the 4d period vector transform as
\begin{equation}
    \Pi=\left(\begin{array}{c}
        Z^I \\ \mathbb{z}_I
    \end{array}\right)\mapsto\tilde{\Pi}=\left(\begin{array}{cc}
        e^{-K/3}{\cal A} & 0 \\
        0 & e^{K/3}({\cal A}^{-1})^T
    \end{array}\right)\left(\begin{array}{c}
        Z^I \\ \mathbb{z}_I
    \end{array}\right)=\left(\begin{array}{c}
        \tilde{Z}^I \\ \tilde{\mathbb{z}}_I
    \end{array}\right)\,,
\end{equation}
where it is crucial that the covector $\mathbb{z}_I$ is written as a column vector in $\Pi$. This shows that the induced 4d change of basis is a symplectic transformation in the 4d supergravity which, as discussed in Section~\ref{sec:special-geometry}, leaves the 4d physics invariant. We have therefore established the existence of a K\"ahler basis also in 4d $\cN=2$ supergravities with a prepotential of the form~\eqref{eq:F-cubic}. 
\end{proof}

It is natural to ask whether the existence of a 4d K\"ahler basis can be deduced without reference to a higher-dimensional theory. Repeating the same procedure as in 5d, our first idea would be to base such a proof on properties of ${\rm Im}(\mathbb{z}_I)$, the 4d analogue of \ref{5dtension}, whose non-negativity, however, is not obvious. Since we recover the geometry of the vector multiplet moduli space of the 5d supergravity by restricting to zero axions in 4d, we believe that the argument presented above is the most direct way of inferring the existence of a K\"ahler basis in 4d $\cN=2$ supergravity theories near type IV$_n$ boundaries.\footnote{This is in fact further substantiated by considering the special class of 4d $\cN=2$ vector multiplet moduli spaces arising as complexified K\"ahler moduli spaces in Calabi--Yau compactifications of Type IIA string theory. Indeed, the $C_{ijk}$ correspond to the classical triple intersection numbers of the underlying Calabi--Yau whose geometry is encoded purely in the saxionic parts of the vector multiplet scalars, whereas the axionic parts are governed by the $B$-field.}

\section{Discussion}\label{sec:discussion}
We have analysed the projective special K\"ahler geometry of vector multiplet moduli spaces of 4d $\cN=2$ supergravity theories from the perspective of the Distance and Emergent String Conjectures. As formulated in~\cite{Ferrara:1991np,Ceresole:1992su}, this geometry gives rise to a {\it real} variation of polarised Hodge structure (VPHS), the explicit construction of which we have presented in Section~\ref{sec:special-geometry}.\footnote{A different construction in terms of Legendre submanifolds for a certain contact structure is given in~\cite{Cecotti:1989kn}.} The theory of {\it integral} VPHS with quasi-unipotent monodromy has been crucial to recent developments concerning the Distance and Emergent String Conjectures~\cite{Grimm:2018ohb,Grimm:2018cpv,Corvilain:2018lgw,Grimm:2019wtx,Grimm:2019bey,Grimm:2020cda,Bastian:2020egp,Grimm:2021ikg,Bastian:2021eom,Bastian:2021hpc,Grimm:2021ckh,Hassfeld:2025uoy,Monnee:2025ynn,Monnee:2025msf} in Calabi--Yau compactifications of Type II string theory. For a general 4d $\cN=2$ supergravity, however, the existence of such an integral structure is not guaranteed and so the classic results of~\cite{schmid,CKS} are, a priori, not applicable. Interestingly, the very recent works~\cite{Deng2022OnTN,complexVHS,complexVHS-multi} generalise large parts of the results of~\cite{schmid,CKS}, in particular the nilpotent orbit theorem as well as the Hodge norm estimates, to complex VPHS without underlying integral structure. These more general mathematical results have allowed us to extend the classification and analysis of infinite distance limits, originally performed in~\cite{Grimm:2018ohb,Grimm:2018cpv,Corvilain:2018lgw} for Type IIB compactifications, to all 4d $\cN=2$ vector multiplet moduli spaces, irrespective of the existence of a UV completion. 

More precisely, the analysis of single-parameter infinite distance limits carries over without any assumptions to the general supergravity setting, thanks to the results of~\cite{complexVHS}. This means that the minimal physical 1- and 2-form couplings $\mathfrak{q}_{\rm min}$ and $\cQ_{\rm min}$ (defined in~\eqref{eq:1-form} and~\eqref{eq:2-form}) scale as specified in Claim \ref{Claim1}. This shows perfect agreement with the predictions of the Emergent String Conjecture and is a first hint that within the context of the vector multiplet moduli space of 4d ${\cal N}=2$ supergravity, the exponential rates given in~\eqref{eq:distance-scaling} do not constitute quantum gravitational data, but rather follow from supersymmetry alone. For multi-parameter limits, we have to restrict to those of simple normal crossing type, to which the mathematical analysis of~\cite{Deng2022OnTN,complexVHS-multi} applies. Within this class of limits, there is again perfect agreement between the behaviour of the minimal physical 1- and 2-form couplings and the Emergent String Conjecture. We find it quite remarkable that, at the level of gauge couplings, the supergravity is restrictive enough to only allow for infinite distance limits enforcing the dichotomy predicted by quantum gravity. 

These findings can be of practical use in situations where quantum corrections obscure the correct asymptotic behaviour in the vector multiplet moduli space. Even if the exact form of the corrections may not be known, the projective special K\"ahler geometry guarantees that the infinite distance loci in the fully corrected vector multiplet moduli space must respect the scalings expected from quantum gravity. For example, \cite{Kaufmann:2026fli,Kaufmann:2026mha} discussed the infinite distance behaviour in the vector multiplet moduli space of an orientifold of Type IIB string theory on ${\rm K3} \times T^2$. While at the classical level, the prepotential is of $STU$-type, perturbative and non-perturbative corrections \cite{Camara:2008zk,Berg:2005ja} obstruct infinite distance limits in the naive, classical coordinates. It is therefore welcome news that the projective special K\"ahler geometry and its VPHS guarantee that the asymptotic boundaries of the quantum corrected vector multiplet moduli space are in agreement with the expectations from the Emergent String Conjecture. While in the specific example, the asymptotic form of the quantum corrected flat coordinates is obvious already from duality to a Type IIA compactification on a Calabi-Yau threefold \cite{Antoniadis:1997eg,Kaufmann:2026fli,Kaufmann:2026mha}, similar conclusions can be drawn also in more complicated settings where a controlled dual formulation may not be available. It would be interesting to investigate this further.

While the asymptotic scaling of the 1- and 2-form couplings, at least in single-parameter limits, is  fully dictated by 4d ${\cal N}=2$ supergravity, there are important consistency conditions which do not follow from supersymmetry alone. We have identified in particular the existence of an underlying integral structure and the existence of a simple normal crossing compactification as such such additional conditions. In Calabi-Yau compactifications, both are automatic. More generally, we have linked these properties to universal quantum gravity principles. Regarding the underlying integral structure, such a connection was anticipated in~\cite{Cecotti:1989kn} and we have identified the relevant quantum data to be the completeness hypothesis as well as absence of global symmetries: Without an integral structure, mutually non-rational charges can exist and lead to a violation of both these principles \cite{Banks:2010zn}. This motivates our Quantum Gravity Principle~\ref{Swamp1}, which identifies an integral structure as a necessary condition for the supergravity theory to admit a UV completion. We have also found a physical interpretation of the simple normal crossing condition underlying the analysis of~\cite{Deng2022OnTN,complexVHS-multi}: Normal crossing is  necessary to ensure consistency with the Distance and Emergent String Conjectures in the sense that it guarantees well-defined duality frames emerging at the intersection of any two single-parameter limits. 
If normal crossing is violated, an inconsistency arises even if the theory has only a finite number of states, and hence this condition is independent of the appearance of any infinite tower. In Example~\ref{ex:non-snc} we have demonstrated that  normal crossing does not follow automatically from the projective special K\"ahler geometry, which led to the formulation of Quantum Gravity Principle~\ref{Swamp2}. Interestingly, a sufficient condition for simple normal crossing, quasi-projectiveness, also implies compactifiability of the moduli space in the sense of \cite{Delgado:2024skw}.

While both quantum gravity principles formulated in Section~\ref{sec:swamp} are strictly necessary in order for a 4d $\cN=2$ supergravity to have a UV completion within string theory, they are by no means sufficient. Even ignoring the hypermultiplet sector completely, what remains to be shown when these principles hold is the existence of an infinite tower of states becoming light at infinite distance in the vector multiplet moduli space. In other words, it is  the tower itself together with its properties (spacing, degeneracies) which constitutes the actual quantum gravitational content of the Emergent String Conjecture. In supergravity theories inherited from Calabi-Yau threefolds, the explicit geometry of the latter is key to establish the existence of these towers \cite{Lee:2019wij,Hassfeld:2025uoy,Monnee:2025ynn}. It would  be very important to find further compelling bottom-up (and probably indirect) arguments for their existence and properties, possibly along the lines of~\cite{Cribiori:2023ffn,Basile:2023blg,Basile:2024dqq,Herraez:2024kux,Bedroya:2024ubj}.

\vspace{0.5cm}

\noindent{\bf Acknowledgements.}
We thank Vicente Cort\'es, Damian van de Heisteeg, David Lindemann, Jeroen Monnee and Max Wiesner for useful discussions. We are particularly grateful to Vicente Cort\'es and Jeroen Monnee for pointing out several important references to us, and to Christian Schnell for sharing the upcoming work \cite{complexVHS-multi} with us. We also thank Stefano Lanza for collaboration during the early stages of this work. This work is supported in part by Deutsche Forschungsgemeinschaft under Germany’s Excellence Strategy EXC 2121 Quantum Universe 390833306, by Deutsche Forschungsgemeinschaft through a German-Israeli Project Cooperation (DIP) grant “Holography and the Swampland” and by Deutsche Forschungsgemeinschaft through the Collaborative Research Center 1624 “Higher Structures, Moduli Spaces and Integrability”. 

\bibliographystyle{jhep}
\bibliography{references.bib}

\end{document}